\theoremstyle{definition}
\newtheorem{defn}{\protect\definitionname}
\theoremstyle{plain}
\newtheorem{thm}{\protect\theoremname}
\theoremstyle{plain}
\newtheorem{lem}{\protect\lemmaname}
\theoremstyle{plain}
\newtheorem{prop}{\protect\propositionname}
\theoremstyle{plain}
\newtheorem{cor}{\protect\corollaryname}
\newcolumntype{C}[1]{>{\centering\arraybackslash}p{#1}}
\newcolumntype{J}[1]{>{\justify\arraybackslash}p{#1}}
\newcolumntype{R}[1]{>{\RaggedLeft\arraybackslash}p{#1}}
\newcolumntype{Q}[1]{>{\columncolor{Gray}\RaggedLeft\arraybackslash}p{#1}}
\newcolumntype{L}[1]{>{\RaggedRight\arraybackslash}p{#1}}
\newcolumntype{G}{@{\extracolsep{0.5cm}}l@{\extracolsep{0pt}}}%
\let\OLDthebibliography\thebibliography
\renewcommand\thebibliography[1]{
  \OLDthebibliography{#1}
  \setlength{\parskip}{0pt}
  \setlength{\itemsep}{0pt plus 0.3ex}
}
\providecommand{\corollaryname}{Corollary}
\providecommand{\definitionname}{Definition}
\providecommand{\lemmaname}{Lemma}
\providecommand{\propositionname}{Proposition}
\providecommand{\theoremname}{Theorem}
\begin{document}
\title{A New Parametrization of Correlation Matrices\thanks{We are grateful for many valuable comments made by Immanuel Bomze,
Bo Honoré, Ulrich Müller, Georg Pflug, Werner Ploberger, Rogier Quaedvlieg,
and Christopher Sims, as well as many conference and seminar participants. }\emph{\normalsize{}\medskip{}
}}
\author{\textbf{Ilya Archakov}$^{a}$\textbf{ and Peter Reinhard Hansen}$^{b}$\textbf{}\thanks{Address: University of North Carolina, Department of Economics, 107
Gardner Hall Chapel Hill, NC 27599-3305}\bigskip{}
\\
{\normalsize{}$^{a}$}\emph{\normalsize{}University of Vienna\smallskip{}
}\\
{\normalsize{}$^{b}$}\emph{\normalsize{}University of North Carolina
\& Copenhagen Business School\medskip{}
}}
\date{\emph{\normalsize{}\today}}
\maketitle
\begin{abstract}
We introduce a novel parametrization of the correlation matrix. The
reparametrization facilitates modeling of correlation and covariance
matrices by an unrestricted vector, where positive definiteness is
an innate property. This parametrization can be viewed as a generalization
of Fisher's $Z$-transformation to higher dimensions and has a wide
range of potential applications. An algorithm for reconstructing the
unique $n\times n$ correlation matrix from any vector in $\mathbb{R}^{n(n-1)/2}$
is provided, and we derive its numerical complexity. 

\bigskip{}
\end{abstract}
\textit{\small{}Keywords:}{\small{} Correlation Matrix, Covariance
Modeling, Fisher Transformation.}{\small\par}

\noindent \textit{\small{}JEL Classification:}{\small{} C10; C22;
C58 \newpage}{\small\par}

\section{Introduction}

We propose a new way to parametrize a covariance matrix that ensures
positive definiteness without imposing additional restrictions. The
central element of the parametrization is the matrix logarithmic transformation
of the correlations matrix, $\log C$, whose lower off-diagonal elements
are stacked into the vector $\gamma=\gamma(C)$. We show that this
transformation defines a one-to-one correspondence between the set
of $n\times n$ non-singular correlation matrices and $\mathbb{R}^{n(n-1)/2}$,
and we propose a fast algorithm for the computation of the inverse
mapping.\footnote{Code for this algorithm (Julia, Matlab, Ox, Python, and R) is provided
in the Web Appendix.} In the bivariate case, $n=2$, $\gamma(C)$ is identical to the Fisher
transformation, and simulation results suggest that $\gamma(C)$ inherits
some of the attractive properties of the Fisher transformation when
$n>2$. 

Our results show that a non-singular $n\times n$ covariance matrix
can be expressed as a unique vector in $\mathbb{R}^{n(n+1)/2}$ that
consists of the $n$ log-variances and $\gamma$. This facilitates
the modeling of covariance matrices in terms of an unrestricted vector
in $\mathbb{R}^{n(n+1)/2}$. In models with dynamic covariance matrices,
such as multivariate GARCH models and stochastic volatility models,
the parametrization offers a new way to structure multivariate volatility
models. The vector representation offers new ways to regularizing
large covariance matrices by imposing structure on $\gamma$. The
new parametrization can also be used to specify distributions on the
space of non-singular correlation matrices and covariance matrices.
This could be useful in multivariate stochastic volatility models
and Bayesian analysis.

It is convenient to reparametrize a covariance matrix as a vector
that is unrestricted in $\mathbb{R}^{d}$, and the literature has
proposed several methods to this end, see \citet{PinheiroBates:1996}.
These methods include the Cholesky decomposition, the spherical trigonometric
transformation, transformations based on partial correlation vines,
and methods based on the spectral representation, such as the matrix
logarithm, see e.g. \citet{KurowickaCooke:2003}. The matrix logarithm
has been used in the modeling of covariance matrices in \citet{LeonardHsu:1992}
and \citet{ChiuLeonardTsui:1996}. In GARCH and stochastic volatility
models it was used in \citet{Kawakatsu:2006}, \citet{IshiharaOmoriAsai:2016},
and \citet{AsaiSo:2015}, and \citet{BauerVorkink:2011} used the
matrix logarithm for modeling and forecasting of realized covariance
matrices. The transformation also emerges as a special case of the
Box-Cox transformation, see \citet{Weigand:2014} for an application
to realized covariance matrices. 

We do not apply the matrix logarithm to covariance matrices, but to
correlation matrices. Modeling the correlation matrix separately from
the individual variances is commonly done in multivariate GARCH models,
see e.g. \citet{Bollerslev:1990}, \citet{Engle2002}, \citet{TseTsui:2002},
and \citet{EngleKelly2012}. The new parametrization can be used to
define a new family of multivariate GARCH models, that need not impose
additional restrictions beyond positivity. Additional structure can
be imposed, if so desired, and we provide examples of this in Section
\ref{sec:Auxiliary-Results}. The new parametrization can also be
used in dynamic models of multivariate volatility that make use of
realized measures of volatility. Such as those in \citet{Liu_2009},
\citet{ChiriacVoev:2011}, \citet{Golosnoy_Gribisch_Liesenfeld_2012},
\citet{BauwensStortiViolante:2012}, \citet{NoureldinShephardSheppard:2012},
\citet{HansenLundeVoev:2014}, and \citet{GorgiHansenJanusKoopman:2019}.

The paper is organized as follows. We introduce and motivate the new
parametrization of correlation matrices in Section 2 by relating it
to the Fisher transformation. We present the main theoretical results
in Section \ref{sec:Theoretical-Framework}, auxiliary results in
Section \ref{sec:Auxiliary-Results}, and analyze the algorithm for
evaluating the inverse mapping, $C(\gamma)$, in Section \ref{sec:Algorithm}.
We conclude and summarize in Section \ref{sec:Concluding-Remarks}.
All proofs are given in the Appendix, and additional results and computer
code are collected in the Web Appendix, see \citet{ArchakovHansen:CorrAppendix}.

\section{Motivation}

We motivate the proposed method by considering a non-singular $2\times2$
covariance matrix, with variances $\sigma_{1}^{2}$ and $\sigma_{2}^{2}$
and the correlation $\rho=\sigma_{12}/(\sigma_{1}\sigma_{2})\in(-1,1)$.
This matrix can be reparametrized as the vector $v=(\log\sigma_{1},\log\sigma_{2},\digamma(\rho))^{\prime}$,
where $\digamma(\rho)=\tfrac{1}{2}\log\tfrac{1+\rho}{1-\rho}$ is
the Fisher transformation. Because any $v\in\mathbb{R}^{3}$ maps
to a unique non-singular covariance matrix this defines a one-to-one
mapping between the non-singular covariance matrices and $\mathbb{R}^{3}$.
The vector parametrization is convenient because a positive definite
covariance matrix is guaranteed without imposing additional restrictions. 

We seek a similar parametrization of covariance matrices when $n>2$.
Specifically, a mapping so that 1) Any non-singular covariance matrix,
$\Sigma$, maps to a unique vector $v=\nu(\Sigma)\in\mathbb{R}^{d}$;
2) Any vector $v\in\mathbb{R}^{d}$ maps to a unique covariance matrix
$\Sigma=\nu^{-1}(v)$; 3) The parametrization, $v=\nu(\Sigma)$, is
``invariant'' to the ordering of the variables that define $\Sigma$;
and 4) the elements of $v$ are easily interpretable.

The parametrization, $v=(\log\sigma_{1},\log\sigma_{2},\tfrac{1}{2}\log\frac{1+\rho}{1-\rho})^{\prime}$,
has all these above properties. The Cholesky representation is not
invariant to the ordering of variables. The matrix logarithm transformation
of covariance matrix, $\log\Sigma$, satisfies the first three three
properties, but the resulting elements are difficult to interpret,
because they depend non-linearly on all elements of $\Sigma$. For
$n>2$ one could consider the element-wise Fisher transformations
of every correlation, but this will not satisfy the second property.\footnote{For instance, the inverse Fisher transformation of, $-2$, $0$, and
$\tfrac{1}{2}$ will result in three correlations that, combined,
will produce a ``correlation matrix'' with a negative eigenvalue. } 

Returning to the case with a $2\times2$ correlation matrix. We observe
that the Fisher transformation appears as the off-diagonal elements
when we take the matrix-logarithm of an $2\times2$ correlation matrix:\setlength{\belowdisplayskip}{5pt} \setlength{\belowdisplayshortskip}{5pt} \setlength{\abovedisplayskip}{-8pt} \setlength{\abovedisplayshortskip}{-8pt}

\[
\log\left(\begin{array}{cc}
1 & \rho\\
\rho & 1
\end{array}\right)=\left(\begin{array}{cc}
\frac{1}{2}\log(1-\rho^{2}) & \frac{1}{2}\log\tfrac{1+\rho}{1-\rho}\\
\frac{1}{2}\log\tfrac{1+\rho}{1-\rho} & \frac{1}{2}\log(1-\rho^{2})
\end{array}\right).
\]
In this paper, we propose to parametrize correlation matrices using
the off-diagonal elements of $\log C$, so that an $n\times n$ covariance
matrix, $\Sigma$, is parametrized by the $n$ log-variances and the
$n(n-1)/2$ off-diagonal elements of $\log C$, denoted by $\gamma$.
We will show that this parametrization satisfies the first three objectives
stated above. The fourth objective is partly satisfied, because $n$
elements of $v$ will correspond to the $n$ individual variances,
whereas the remaining elements parametrize the underlying correlation
matrix. The Fisher transformation has attractive finite sample properties
(variance stabilizing and skewness reducing) and $\gamma$ is identical
to the Fisher transformation when $n=2$. Simulation results in the
Web Appendix suggest that the off-diagonal elements of $\log C$ inherit
some of these properties when $n>2$. 

\section{Theoretical Framework and Main Results\label{sec:Theoretical-Framework}}

We need to introduce some useful notation and terminology. The operator,
$\mathrm{diag}(\cdot)$, is used in two ways. When the argument is
a vector, $v=(v_{1},\ldots,v_{n})^{\prime}$, then $\mathrm{diag}(v)$
denotes the $n\times n$ diagonal matrix with $v_{1},\ldots,v_{n}$
along the diagonal, and when the argument is a square matrix, $A\in\mathbb{R}^{n\times n}$,
then $\mathrm{diag}(A)$ extracts the diagonal of $A$ and returns
it as a column vector, i.e. $\mathrm{diag}(A)=(a_{11},\ldots,a_{nn})^{\prime}\in\mathbb{R}^{n}$.
The matrix exponential is defined by $e^{A}=\sum_{k=0}^{\infty}\tfrac{A^{k}}{k!}$
for any matrix $A$. For any symmetric matrix, $A$, we have $e^{A}=Q\mathrm{diag}(e^{\lambda_{1}},\ldots,e^{\lambda_{n}})Q^{\prime}$,
where $A=Q\Lambda Q^{\prime}$, with $Q$ being an orthonormal matrix,
i.e. $Q^{\prime}Q=I$, and $\Lambda=\mathrm{diag}(\lambda_{1},\ldots,\lambda_{n})$
where $\lambda_{1},\ldots,\lambda_{n}$ are the eigenvalues of $A$.
The general definition of the matrix logarithm is more involved, see
\citet{Higham2008}, but for a symmetric positive definite matrix,
we have that $\log A=Q\log\Lambda Q^{\prime}$, where $\log\Lambda=\mathrm{diag}(\log\lambda_{1},\ldots,\log\lambda_{n})$. 

We use $\mathrm{vecl}(A)$ to denote the vectorization operator of
the lower off-diagonal elements of $A$. For a non-singular correlation
matrix, $C$, we let $G=\log C$ denote the logarithmically transformed
correlation matrix, and let $F$ be the matrix of element-wise Fisher
transformed correlations (whose diagonal is unspecified). The vector
of correlation coefficients is denoted by $\varrho=\mathrm{vecl}C$,
and the corresponding elements of $G$ and $F$ are denoted by $\gamma=\mathrm{vecl}G$
and $\phi=\mathrm{vecl}F$, respectively.
\begin{defn}[New Parametrization of Correlation Matrices]
For a non-singular correlation matrix, $C$, we introduce the following
parametrization: $\gamma(C):=\mathrm{vecl}(\log C)$.
\end{defn}
Because $\gamma(C)$ discards the diagonal elements of $\log C$,
it is relevant to ask: Can $C$ be reconstructed from $\gamma$ alone?
If so: Is the reconstructed correlation matrix unique for all $\gamma$?
To formalize this inversion problem, we introduce the following operator.
For an $n\times n$ matrix, $A$, and any vector $x\in\mathbb{R}^{n}$
we let $A[x]$ denote the matrix $A$ where $x$ has replaced its
diagonal. So it follows that $\mathrm{vecl}(A)=\mathrm{vecl}(A[x])$
and that $x=\mathrm{diag}(A[x])$. 

\subsection{Main Theoretical Results}
\begin{thm}
\label{thm:OneToOne}For any real symmetric matrix, $A\in\mathbb{R}^{n\times n}$,
there exists a unique vector, $x^{\ast}\in\mathbb{R}^{n}$, such that
$e^{A[x^{\ast}]}$ is a correlation matrix. 
\end{thm}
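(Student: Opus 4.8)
The plan is to turn this into an unconstrained convex optimization problem. Because $A[x]$ is real symmetric, $e^{A[x]}$ is automatically symmetric and positive definite, so the requirement that $e^{A[x]}$ be a correlation matrix reduces to the $n$ scalar equations $\mathrm{diag}(e^{A[x]})=\mathbf{1}$, where $\mathbf{1}=(1,\ldots,1)^{\prime}$. I would introduce the (real-analytic) potential function $g(x):=\mathrm{tr}(e^{A[x]})-\mathbf{1}^{\prime}x$ on $\mathbb{R}^{n}$. Writing $A[x]=A_{0}+\mathrm{diag}(x)$, where $A_{0}$ is $A$ with its diagonal set to zero, we have $\partial A[x]/\partial x_{j}=E_{jj}$ (the matrix with a $1$ in entry $(j,j)$ and zeros elsewhere), and the standard identity $\tfrac{d}{dt}\mathrm{tr}(e^{B(t)})=\mathrm{tr}(e^{B(t)}\dot{B}(t))$ gives $\partial g/\partial x_{j}=(e^{A[x]})_{jj}-1$, i.e. $\nabla g(x)=\mathrm{diag}(e^{A[x]})-\mathbf{1}$. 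Hence $x^{\ast}$ solves the problem if and only if it is a critical point of $g$, and it suffices to show $g$ has exactly one critical point.

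For uniqueness I would prove that $g$ is strictly convex. Differentiating once more using Duhamel's representation $\partial e^{A[x]}/\partial x_{i}=\int_{0}^{1}e^{sA[x]}E_{ii}e^{(1-s)A[x]}\,ds$ and reading off the $(j,j)$ entry (using the symmetry of $e^{tA[x]}$) yields $\big[\nabla^{2}g(x)\big]_{ij}=\int_{0}^{1}(e^{sA[x]})_{ij}\,(e^{(1-s)A[x]})_{ij}\,ds$, so that for any $y\in\mathbb{R}^{n}$,
\[
y^{\prime}\nabla^{2}g(x)\,y=\int_{0}^{1}y^{\prime}\big(e^{sA[x]}\circ e^{(1-s)A[x]}\big)\,y\,ds,
\]
where $\circ$ denotes the Hadamard product. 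Each of $e^{sA[x]}$ and $e^{(1-s)A[x]}$ is symmetric positive definite, so by the Schur product theorem the integrand is nonnegative, and at $s=\tfrac12$ the matrix $e^{A[x]/2}\circ e^{A[x]/2}$ is positive definite (the Hadamard square of a positive definite matrix); by continuity in $s$ the integral is then strictly positive whenever $y\neq0$. Thus $\nabla^{2}g(x)\succ 0$ for every $x$, $g$ is strictly convex, and it has at most one critical point.

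For existence it remains to show $g$ attains a global minimum, and for this I would establish coercivity. Applying Jensen's inequality to the spectral decomposition of $A[x]$ gives $(e^{A[x]})_{ii}=e_{i}^{\prime}e^{A[x]}e_{i}\geq e^{e_{i}^{\prime}A[x]e_{i}}=e^{x_{i}}$, hence $g(x)\geq\sum_{i=1}^{n}(e^{x_{i}}-x_{i})$. Since $e^{t}-t\geq1$ for all $t$ and $e^{t}-t\to\infty$ as $|t|\to\infty$, the right-hand side tends to $+\infty$ as $\|x\|\to\infty$; being continuous and coercive, $g$ attains its infimum at some $x^{\ast}\in\mathbb{R}^{n}$. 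This $x^{\ast}$ is a critical point, so $\mathrm{diag}(e^{A[x^{\ast}]})=\mathbf{1}$ and $e^{A[x^{\ast}]}$ is a correlation matrix; strict convexity makes $x^{\ast}$ the only such vector, proving the theorem.

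The crux is the strict-convexity step: it hinges on the clean Hessian formula, which requires careful use of Duhamel's integral representation for the differential of the matrix exponential, together with the strengthening of the Schur product theorem that the Hadamard square of a positive definite matrix is again positive definite. By comparison, the coercivity estimate is essentially a one-line consequence of Jensen's inequality, and the reduction to locating a critical point of $g$ is immediate from the gradient identity $\nabla g(x)=\mathrm{diag}(e^{A[x]})-\mathbf{1}$.
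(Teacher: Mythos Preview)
Your argument is correct, and it follows a genuinely different route from the paper's. The paper proves the result by a Banach fixed-point argument: it defines the iteration map $g(x)=x-\log\mathrm{diag}(e^{A[x]})$ and shows, via a fairly detailed spectral computation using the Linton--McCrorie formula for $\partial\mathrm{vec}(e^{A[x]})/\partial\mathrm{vec}(A[x])$, that the Jacobian of this map has all eigenvalues in $[0,1)$; uniqueness and existence then come simultaneously from the contraction property. You instead cast the problem as unconstrained convex minimization of the potential $\mathrm{tr}(e^{A[x]})-\mathbf{1}^{\prime}x$, obtain existence from coercivity (your Jensen estimate $(e^{A[x]})_{ii}\geq e^{x_{i}}$ is exactly the paper's Lemma~\ref{lem:DiagonalA} read in the other direction), and obtain uniqueness from strict convexity via Duhamel's integral and the Schur product theorem.

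The two Hessians are in fact the same object: your $\nabla^{2}g(x)$ coincides with the paper's matrix $H(x)$, and both proofs need it to be positive definite. The paper gets this by recognizing $H(x)$ as a principal submatrix of the positive definite Linton--McCrorie Jacobian; you get it from the Hadamard-product representation, which is slicker and avoids the eigen-decomposition bookkeeping. What the paper's approach buys is that the contraction itself \emph{is} the algorithm of Corollary~\ref{cor:Algorithm}, with a built-in linear convergence rate tied to the Lipschitz constant; your variational proof does not directly yield that iteration or its rate, though it would naturally suggest a Newton or gradient scheme on the strictly convex potential instead.
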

This shows that any vector in $\mathbb{R}^{n(n-1)/2}$ maps to a unique
correlation matrix, so that $\gamma(C)$ is a one-to-one correspondence
between $\mathcal{C}_{n}$ and $\mathbb{R}^{n(n-1)/2}$, where $\mathcal{C}_{n}$
denotes the set of non-singular correlation matrices.\footnote{Singular correlation matrices with known null space can be parametrized
applying the transformation to a full rank principal sub-matrix. We
do not explore this topic in this paper.} The inverse mapping, denoted $C(\gamma)$, is therefore well defined.

Next, we outline the structure of the proof of Theorem \ref{thm:OneToOne},
because it provides intuition for the algorithm that is used to reconstruct
$C$ from $\gamma$. 

Consider the mapping $g:\mathbb{R}^{n}\curvearrowright\mathbb{R}^{n}$,
$g(x)=x-\log\mathrm{diag}(e^{A[x]})$, where the logarithm is applied
element-wise to vector of diagonal elements. Because $e^{A[x]}$ is
a correlation matrix if and only if all diagonal elements are equal
to one, the requirement is simply $g(x^{\ast})=x^{\ast}$. So Theorem
\ref{thm:OneToOne} is equivalent to the statement that $g$ has a
unique fixed-point for any matrix $A$. This follows by showing the
following result and applying Banach fixed-point theorem.
\begin{lem}
\label{lem:gContraction}The mapping $g$ is a contraction for any
symmetric matrix $A$. 
\end{lem}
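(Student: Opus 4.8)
The plan is to differentiate $g$ and to control its Jacobian. Since $x\mapsto A[x]$ is affine, $e^{A[x]}$ is a real-analytic, positive-definite matrix-valued function of $x$, so $g$ is smooth on all of $\mathbb{R}^{n}$ and $Dg(x)=I-J(x)$, where $J(x)$ is the Jacobian of $x\mapsto\log\mathrm{diag}(e^{A[x]})$. Applying the derivative-of-the-exponential (Duhamel) identity $\partial_{x_{j}}e^{A[x]}=\int_{0}^{1}e^{\tau A[x]}(e_{j}e_{j}^{\prime})e^{(1-\tau)A[x]}\,d\tau$ and reading off the $(i,i)$ entry gives $J(x)_{ij}=\tfrac{1}{(e^{A[x]})_{ii}}\int_{0}^{1}(e^{\tau A[x]})_{ij}(e^{(1-\tau)A[x]})_{ij}\,d\tau$. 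Two observations are immediate: since $\sum_{j}e_{j}e_{j}^{\prime}=I$, every row of $J(x)$ sums to $1$, so $Dg(x)(1,\dots,1)^{\prime}=0$ — which simply records that $g$ is invariant under $x\mapsto x+c(1,\dots,1)^{\prime}$; and the diagonal entries $J(x)_{ii}$ are positive, being integrals of products of diagonal entries of positive-definite matrices.

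The heart of the matter is a positivity property of $J$. Put $D_{x}:=\mathrm{diag}\big((e^{A[x]})_{11},\dots,(e^{A[x]})_{nn}\big)$; the formula above says precisely that $D_{x}J(x)=S(x):=\int_{0}^{1}e^{\tau A[x]}\circ e^{(1-\tau)A[x]}\,d\tau$, with $\circ$ the Hadamard product. Each integrand is a Hadamard product of positive-definite matrices, hence positive definite by the Schur product theorem, so $S(x)$ is symmetric positive definite; therefore $J(x)=D_{x}^{-1}S(x)$ is similar to the symmetric positive-definite matrix $D_{x}^{-1/2}S(x)D_{x}^{-1/2}$, has real positive eigenvalues, and (by the row-sum fact) has $1$ among them. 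For an upper bound I would use the spectral decomposition $A[x]=\sum_{k}\lambda_{k}q_{k}q_{k}^{\prime}$ to write $e^{\tau A[x]}\circ e^{(1-\tau)A[x]}=\sum_{k,l}e^{\tau\lambda_{k}+(1-\tau)\lambda_{l}}(q_{k}\circ q_{l})(q_{k}\circ q_{l})^{\prime}$: a nonnegative combination of fixed rank-one positive-semidefinite matrices whose scalar coefficients are convex in $\tau$. Integrating, and using $\int_{0}^{1}c(\tau)\,d\tau\le\tfrac12\big(c(0)+c(1)\big)$ for convex $c$ together with the fact that both endpoint integrands equal $D_{x}$, gives $S(x)\preceq D_{x}$. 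Hence every eigenvalue of $J(x)$ lies in $(0,1]$ and every eigenvalue of $Dg(x)=I-J(x)$ lies in $[0,1)$; equivalently $Dg(x)$ is self-adjoint, with operator norm strictly below $1$, for the inner product $\langle u,v\rangle_{x}:=u^{\prime}D_{x}v$. (The same matrix $S(x)$ is the Hessian of the strictly convex, coercive function $x\mapsto\mathrm{tr}\,e^{A[x]}-(1,\dots,1)x$, whose unique critical point is the fixed point $x^{\ast}$, which incidentally gives a direct proof that $x^{\ast}$ exists and is unique.)

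What remains — and the step I expect to be the genuine obstacle — is to convert this pointwise spectral control into a single contraction estimate valid on all of $\mathbb{R}^{n}$: the symmetrizing weight $D_{x}$ varies with $x$, so the norms $\|\cdot\|_{x}$ in which $\|Dg(x)\|<1$ are different norms, and one must produce either a single ($A$-dependent) norm dominating the whole family $\{Dg(x)\}$ or a uniform bound $S(x)\succeq\delta D_{x}$ with $\delta>0$. I would attempt this by descending to the quotient $\mathbb{R}^{n}/\mathrm{span}\{(1,\dots,1)^{\prime}\}$, on which $g$ acts with $x^{\ast}$ as its only fixed point, using coercivity of $\mathrm{tr}\,e^{A[x]}-(1,\dots,1)x$ to trap the iterates in a bounded, $g$-invariant region after finitely many steps, and then combining the pointwise bound with compactness of that region to extract a uniform $c<1$ there. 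Establishing the confinement and turning it into a clean global statement is where I expect most of the work to lie.
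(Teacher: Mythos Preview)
Your argument and the paper's are structurally identical: both compute the Jacobian $\nabla g(x)=I-D_x^{-1}S(x)$ (the paper's $H$ is your $S$), symmetrize to $I-D_x^{-1/2}S(x)D_x^{-1/2}$, and show its eigenvalues lie in $[0,1)$. The packaging differs. For positive definiteness of $S(x)$, the paper observes it is a principal submatrix of the positive-definite Fr\'echet derivative $A(x)$ of the exponential; you obtain it from the Schur product theorem applied to the Duhamel integrand. For $S(x)\preceq D_x$, the paper expands $\tilde J=\sum_{k<l}\varphi_{kl}\,D^{-1/2}u_{kl}u_{kl}'D^{-1/2}$ with $u_{kl}=Q_{\cdot,k}\odot Q_{\cdot,l}$ and proves $\varphi_{kl}=\xi_{kk}+\xi_{ll}-2\xi_{kl}\ge 0$ by hand (its Lemmas A.1--A.2); your Loewner-convexity\,/\,Hermite--Hadamard step is literally the same inequality, since $\xi_{kl}=\int_0^1 e^{\tau\lambda_k+(1-\tau)\lambda_l}\,d\tau$ and the endpoint integrands $I\circ e^{A[x]}$ equal $D_x$. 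Your route is more conceptual and avoids the $\Xi$-bookkeeping; the paper's yields the explicit rank-one decomposition of $\tilde J$, which it uses elsewhere.

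On the final step you are more scrupulous than the paper. The paper simply asserts that ``all eigenvalues of $J(x)$ below one in absolute value for every $x$'' suffices for $g$ to be a contraction and invokes Banach; it does not address the uniformity issue you flag (the symmetrizing weight $D_x$ varies with $x$, and the spectral radius of $\nabla g(x)$ is not bounded away from $1$ globally). So the obstacle you single out is present in the paper's proof as well. Your strictly convex, coercive potential $f(x)=\mathrm{tr}\,e^{A[x]}-\iota'x$ (gradient $\mathrm{diag}(e^{A[x]})-\iota$, Hessian $S(x)$; coercivity since $f(x)=\sum_i(e^{\lambda_i}-\lambda_i)$ with $\lambda_i$ the eigenvalues of $A[x]$) already gives a clean self-contained proof of existence and uniqueness of $x^\ast$ that bypasses the issue, and your confinement-plus-compactness sketch is a reasonable route to a bona fide contraction estimate on a bounded invariant region.
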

The proof of Lemma \ref{lem:gContraction} entails deriving the Jacobian
for $g$, denoted $\nabla g$, and showing that all its eigenvalues
are less than one in absolute value. The largest eigenvalue of $\nabla g$
is, not surprisingly, key for the algorithm that reconstructs $C$
from $\gamma$.

\subsection{Invariance to Reordering of Variables}

The mapping, $\gamma(C)$, is invariant to a reordering of variables
that define $C$, in the sense that a permutation of the variables
that define $C$ will merely result in a permutation of the elements
of $\gamma$. The formal statement is as follows.
\begin{prop}
\label{prop:Permutate}Suppose that $C_{x}=\mathrm{corr}(X)$ and
$C_{y}=\mathrm{corr}(Y)$, where the elements of $X$ is a permutation
of the elements of $Y$. Then the elements of $\gamma_{x}=\gamma(C_{x})$
is a permutation of the elements of $\gamma_{y}=\gamma(C_{y})$.
\end{prop}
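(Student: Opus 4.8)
The plan is to reduce the statement to a fact about conjugation by permutation matrices. Since the entries of $X$ are a permutation of the entries of $Y$, there is an $n\times n$ permutation matrix $P$ with $X=PY$ (almost surely), and $P$ represents a bijection $\pi$ of $\{1,\ldots,n\}$ via $(PY)_{i}=Y_{\pi(i)}$. Taking correlations gives $C_{x}=\mathrm{corr}(PY)=P\,\mathrm{corr}(Y)\,P^{\prime}=PC_{y}P^{\prime}$; this identity uses that $P$ is a \emph{permutation} matrix, so that it commutes appropriately with the diagonal normalization, $D_{x}^{-1/2}=PD_{y}^{-1/2}P^{\prime}$. Because $P$ is orthogonal, $PC_{y}P^{\prime}$ is non-singular whenever $C_{y}$ is, so $G_{x}=\log C_{x}$ is well defined and the implicit hypothesis $C_{x},C_{y}\in\mathcal{C}_{n}$ is self-consistent.

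The key step is that the matrix logarithm commutes with orthogonal conjugation. Writing the spectral decomposition $C_{y}=Q\Lambda Q^{\prime}$ with $Q^{\prime}Q=I$, we get $C_{x}=PC_{y}P^{\prime}=(PQ)\Lambda(PQ)^{\prime}$, and $PQ$ is again orthonormal; hence, by the characterization of $\log$ for symmetric positive definite matrices recalled in Section~\ref{sec:Theoretical-Framework}, $G_{x}=\log C_{x}=(PQ)(\log\Lambda)(PQ)^{\prime}=P(Q\log\Lambda\,Q^{\prime})P^{\prime}=PG_{y}P^{\prime}$. Equivalently, $(G_{x})_{ij}=(G_{y})_{\pi(i)\pi(j)}$ for all $i,j$.

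It remains to translate $G_{x}=PG_{y}P^{\prime}$ into a statement about $\gamma$. The relation $(G_{x})_{ij}=(G_{y})_{\pi(i)\pi(j)}$ shows that $\pi$ maps diagonal positions to diagonal positions and off-diagonal positions to off-diagonal positions; more precisely, $\pi$ induces a bijection on the set of unordered pairs $\{i,j\}$ with $i\neq j$, which is exactly the index set of the off-diagonal entries of a symmetric matrix. Since $G_{x}$ and $G_{y}$ are symmetric, the multiset $\{(G_{x})_{ij}:i>j\}$ coincides with $\{(G_{y})_{ij}:i>j\}$; that is, $\gamma_{x}=\mathrm{vecl}(G_{x})$ is obtained from $\gamma_{y}=\mathrm{vecl}(G_{y})$ by the coordinate permutation of $\{1,\ldots,n(n-1)/2\}$ induced by the action of $\pi$ on unordered pairs. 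This is the claim.

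There is no genuinely hard step here; the only thing that needs care is the bookkeeping in the last paragraph — checking that the index permutation $\pi$ induces a \emph{well-defined} permutation of the coordinate set underlying $\mathrm{vecl}(\cdot)$, so that ``permutation of the matrix entries'' really does become ``permutation of the entries of $\gamma$'' — together with the elementary observation in the first paragraph that $\mathrm{corr}(\cdot)$ transforms by $P(\cdot)P^{\prime}$ precisely because $P$ is a permutation matrix rather than an arbitrary orthogonal matrix.
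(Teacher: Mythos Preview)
Your argument is correct and follows essentially the same route as the paper: use $C_{x}=PC_{y}P^{\prime}$, pass the permutation through the spectral decomposition to obtain $\log C_{x}=P(\log C_{y})P^{\prime}$, and then read off that the off-diagonal entries are merely permuted. Your version is slightly more explicit than the paper's about why $\mathrm{corr}(PY)=P\,\mathrm{corr}(Y)\,P^{\prime}$ requires $P$ to be a permutation matrix and about the induced bijection on unordered index pairs, but the substance is the same.
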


\subsection{An Algorithm for Computing $C(\gamma)$}

Evidently, the solution, $x^{\ast}$, must be such that the diagonal
elements of the matrix, $e^{A[x^{\ast}]}$, are all equal to one.
Equivalently, $\log\mathrm{diag}(e^{A[x^{\ast}]})=0\in\mathbb{R}^{n}$,
where the logarithm is applied element-wise to the vector of diagonal
elements. This observation motivates the following iterative procedure
for determining $x^{\ast}$:\setlength{\belowdisplayskip}{5pt} \setlength{\belowdisplayshortskip}{5pt} \setlength{\abovedisplayskip}{-2pt} \setlength{\abovedisplayshortskip}{-2pt}
\begin{cor}
\label{cor:Algorithm}Consider the sequence,
\begin{align*}
x_{(k+1)}=x_{(k)}-\log\mathrm{diag}(e^{A[x_{(k)}]}), & \qquad k=0,1,2,\ldots
\end{align*}
with an arbitrary initial vector $x_{(0)}\in\mathbb{R}^{n}$. Then
$x_{(k)}\rightarrow x^{\ast}$, where $x^{\ast}$ is the solution
in Theorem \ref{thm:OneToOne}.
\end{cor}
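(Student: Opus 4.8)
The plan is to recognize the proposed recursion as the fixed‑point iteration for the map $g$ introduced just before Lemma \ref{lem:gContraction}, and then to invoke the Banach fixed‑point theorem. Concretely, the recursion reads $x_{(k+1)} = g(x_{(k)})$ with $g(x) = x - \log\mathrm{diag}(e^{A[x]})$, so that $x_{(k)} = g^{k}(x_{(0)})$ for every $k$. Two preliminary observations make this legitimate. First, $g$ is well defined on all of $\mathbb{R}^{n}$: since $A[x]$ is symmetric, $e^{A[x]}$ is symmetric positive definite, hence its diagonal entries are strictly positive, so the element‑wise logarithm in the definition of $g$ is always defined. Second, $\mathbb{R}^{n}$ equipped with the Euclidean metric is a complete metric space, which is what Banach's theorem requires.

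Next I would apply Lemma \ref{lem:gContraction}, which states that $g$ is a contraction for every symmetric $A$. The Banach fixed‑point theorem then yields immediately that $g$ has a unique fixed point $x^{\dagger}\in\mathbb{R}^{n}$ and that the iterates $x_{(k)}=g^{k}(x_{(0)})$ converge to $x^{\dagger}$ for every choice of the initial vector $x_{(0)}$, in fact geometrically at a rate controlled by the contraction modulus (the quantity exploited quantitatively in Section \ref{sec:Algorithm}).

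It then remains only to identify $x^{\dagger}$ with the vector $x^{\ast}$ of Theorem \ref{thm:OneToOne}. By definition, $x^{\ast}$ is the unique vector for which $e^{A[x^{\ast}]}$ is a correlation matrix. Since a symmetric positive definite matrix is a correlation matrix precisely when all its diagonal entries equal one, and since $\log\mathrm{diag}(e^{A[x]})=0$ is equivalent to $\mathrm{diag}(e^{A[x]})=(1,\ldots,1)^{\prime}$, the statement ``$e^{A[x]}$ is a correlation matrix'' is equivalent to ``$g(x)=x$''. Hence $x^{\ast}$ is exactly the fixed point of $g$, so $x^{\dagger}=x^{\ast}$, and therefore $x_{(k)}\rightarrow x^{\ast}$, as claimed.

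I do not expect a substantive obstacle within this corollary itself: once Lemma \ref{lem:gContraction} is in hand the argument is essentially immediate. The only real subtlety is that the notion of ``contraction'' furnished by Lemma \ref{lem:gContraction} must be strong enough for Banach's theorem, i.e.\ it must supply a genuine Lipschitz constant strictly below one that is \emph{uniform} over $\mathbb{R}^{n}$ (a merely pointwise bound on the spectral radius of $\nabla g$ would not suffice, since for a non‑symmetric Jacobian the operator norm can exceed the spectral radius, and even a pointwise operator‑norm bound below one need not integrate to a global contraction unless the bound is uniform). Provided Lemma \ref{lem:gContraction} is established with such a uniform constant — which the surrounding discussion indicates — the corollary follows exactly as above.
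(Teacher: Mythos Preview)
Your proof is correct and matches the paper's approach exactly: the corollary receives no separate proof in the appendix because it is immediate from Lemma \ref{lem:gContraction} and Banach's fixed-point theorem, precisely as you lay out. Your closing caveat about the need for a \emph{uniform} Lipschitz constant is perceptive and in fact touches on a genuine subtlety in the paper's own proof of Lemma \ref{lem:gContraction} (which establishes only that the eigenvalues of $\nabla g(x)$ lie in $[0,1)$ pointwise), but as you correctly observe this is a concern about the lemma rather than the corollary.
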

In practice we find that the simple algorithm, proposed in Corollary
\ref{cor:Algorithm}, converges very fast. This is demonstrated in
Section \ref{sec:Algorithm} for matrices with dimension up to $n=100$.
The result in Theorem \ref{thm:OneToOne} and the algorithm in Corollary
\ref{cor:Algorithm} are easily adapted to a covariance matrix with
known diagonal elements, as we show in Section \ref{subsec:Results-for-Covariance-known-diag}. 

\subsection{Asymptotic Distribution of $\hat{\gamma}$\label{subsec:Asymptotic-Properties}}

Next, we derive the asymptotic distributions of $\hat{\gamma}$ and
the vector of Fisher transformed correlations, $\hat{\phi}$, by deducing
them from those of the empirical correlation matrix. 

Suppose that $\sqrt{T}(\hat{C}-C)\overset{d}{\rightarrow}N(0,\Omega)$,
as $T\rightarrow\infty$. The asymptotic covariance matrix, $\Omega=\mathrm{avar}(\text{vec}(\hat{C}))$,
will be singular because $\hat{C}$ is symmetric and has constant
diagonal elements. Convenient closed-form expressions for $\Omega$
is available in special cases, see e.g. \citet{Neudecker_Wesselman_1990},
\citet{Nel_1985}, and \citet{Browne_Shapiro_1986}.

For the vector of correlation coefficients, $\hat{\varrho}=\text{vecl}(\hat{C})$,
it follows that $\sqrt{T}(\hat{\varrho}-\varrho)\overset{d}{\rightarrow}N(0,\Omega_{\varrho})$,
as $T\rightarrow\infty$, where $\Omega_{\varrho}=E_{l}\Omega E_{l}^{\prime}$
and $E_{l}$ is an elimination matrix, characterized by $\mathrm{vecl}[M]=E_{l}\mathrm{vec}[M]$
for any $n\times n$ matrix $M$. For the element-wise Fisher transform,
the asymptotic distribution reads\setlength{\belowdisplayskip}{5pt} \setlength{\belowdisplayshortskip}{5pt} \setlength{\abovedisplayskip}{-10pt} \setlength{\abovedisplayshortskip}{-10pt}

\begin{equation}
\sqrt{T}(\hat{\phi}-\phi)\overset{d}{\rightarrow}N(0,\Omega_{\phi}),\qquad\Omega_{\phi}=D_{c}E_{l}\Omega E_{l}^{\prime}D_{c},\label{eq:avar-phi}
\end{equation}
where $D_{c}=\text{diag}\Bigl(\frac{1}{1-c_{i}^{2}},\frac{1}{1-c_{2}^{2}},\ldots,\frac{1}{1-c_{d}^{2}}\Bigl)$
and $c_{i}$ is an $i$-th element of $c=\text{vecl}(C)\in\mathbb{R}^{d}$
with $d=n(n-1)/2$, whereas the asymptotic distribution of the new
parametrization of correlation matrices, can be shown to be

\begin{equation}
\sqrt{T}(\hat{\gamma}-\gamma)\overset{d}{\rightarrow}N(0,\Omega_{\gamma}),\qquad\Omega_{\gamma}=E_{l}A^{-1}\Omega A^{-1}E_{l}^{\prime},\label{eq:avar-gamma}
\end{equation}
where $A$ is a Jacobian matrix, such that $\partial\mathrm{vec}(C)=A\,\partial\mathrm{vec}(\log C)$.
The expression for $A$ is given in the Appendix, see (\ref{eq:JacobianA-expression})-(\ref{eq:xi_defined}),
and is taken from \citet{LintonMcCrorie:1995}.

In a classical setting where $\hat{C}$ is computed from i.i.d. random
vectors, the diagonal elements of $\Omega_{\phi}$ are all equal to
one. This demonstrates the variance stabilizing property of the Fisher
transformation. The transformation $\gamma(C)$ is, evidently, not
variance stabilizing when $n>2$, except in special cases. However,
it does appear to reduce skewness, which is another attribute of the
Fisher transformation.

The two expressions for the asymptotic variances, $\Omega_{\phi}$
and $\Omega_{\gamma}$, are not easily compared unless $\Omega$ is
known. Here we will compare them in the situation where $\hat{C}$
is computed from $X_{i}\sim\text{iid}N_{3}(0,\Sigma)$, for four different
choices for $\Sigma$. Scaling the elements of $X_{i}$ does not affect
the limit distributions for $\hat{\varrho}$, $\hat{\phi}$, and $\hat{\gamma}$.
So we can, without loss of generality, focus on the case where $\Sigma=C$.
\begin{table}[H]
\begin{centering}
\scriptsize
\setlength\arraycolsep{2pt}
\begin{tabular*}{\textwidth}{c@{\extracolsep{\fill}}c@{\extracolsep{\fill}}c@{\extracolsep{\fill}}c@{\extracolsep{\fill}}c}
\\[-0.6cm]
\toprule
\\[0mm]
  $\Sigma=C$ & $\text{avar}(\hat\varrho)$ & $\text{avar}(\hat\phi)=$ & $\text{avar}(\hat\gamma)$  & $\text{acorr}(\hat\gamma)$  \\[3mm]
      &                             & $\text{acorr}(\hat\varrho)=\text{acorr}(\hat\phi)$  & \\[3mm]
\midrule
\\[0.0cm]
$\left(\begin{array}{ccc}
1&\bullet&\bullet\\
0&1&\bullet\\
0&0&1
\end{array}\right)$
&
$\left(\begin{array}{ccc}
1.000 & \bullet & \bullet\\
0 & 1.000 & \bullet\\
0 & 0 & 1.000
\end{array}\right)$
&
$\left(\begin{array}{ccc}
1.000 & \bullet & \bullet\\
0 & 1.000 & \bullet\\
0 & 0 & 1.000
\end{array}\right)$
&
$\left(\begin{array}{ccc}
1.000 & \bullet & \bullet\\
0 & 1.000 & \bullet\\
0 & 0 & 1.000
\end{array}\right)$
&
$\left(\begin{array}{ccc}
1.000 & \bullet & \bullet\\
0 & 1.000 & \bullet\\
0 & 0 & 1.000
\end{array}\right)$  \\
\\[0.0cm]

$\left(\begin{array}{ccc}
1&\bullet&\bullet\\
0.5&1&\bullet\\
0.25&0.5&1
\end{array}\right)$
 & $\left(\begin{array}{ccc}
0.562 & \bullet & \bullet\\
0.316 & 0.879 & \bullet\\
0.070 & 0.316 & 0.562
\end{array}\right)$ & $\left(\begin{array}{ccc}
1.000 & \bullet & \bullet\\
0.450 & 1.000 & \bullet\\
0.125 & 0.450 & 1.000
\end{array}\right)$ & $\left(\begin{array}{ccc}
0.966 & \bullet & \bullet\\
0.018 & 0.962 & \bullet\\
0.021 & 0.018 & 0.966
\end{array}\right)$  
& $\left(\begin{array}{ccc}
1.000 & \bullet & \bullet\\
0.018 & 1.000 & \bullet\\
0.021 & 0.018 & 1.000
\end{array}\right)$  \\
\\[0.0cm]

$\left(\begin{array}{ccc}
1&\bullet&\bullet\\
0.9&1&\bullet\\
0.81&0.9&1
\end{array}\right)$
& $\left(\begin{array}{ccc}
0.036 & \bullet & \bullet\\
0.046 & 0.118 & \bullet\\
0.015 & 0.046 & 0.036
\end{array}\right)$
&
$\left(\begin{array}{ccc}
1.000 & \bullet & \bullet\\
0.698 & 1.000 & \bullet\\
0.405 & 0.698 & 1.000
\end{array}\right)$
&
$\left(\begin{array}{ccc}
0.817 & \bullet & \bullet\\
0.081 & 0.860 & \bullet\\
0.093 & 0.081 & 0.817
\end{array}\right)$
&
$\left(\begin{array}{ccc}
1.000 & \bullet & \bullet\\
0.097 & 1.000 & \bullet\\
0.114 & 0.097 & 1.000
\end{array}\right)$  \\
\\[0.0cm]

$\left(\begin{array}{ccc}
1&\bullet&\bullet\\
0.99&1&\bullet\\
0.98&0.99&1
\end{array}\right)$
& \hspace{-1mm}$\tfrac{1}{10}\hspace{-1mm}\left(\begin{array}{ccc}
0.004 & \bullet & \bullet\\
0.006 & 0.016 & \bullet\\
0.002 & 0.006 & 0.004
\end{array}\right)$ &
$\left(\begin{array}{ccc}
1.000 & \bullet & \bullet\\
0.745 & 1.000 & \bullet\\
0.490 & 0.745 & 1.000
\end{array}\right)$ &
$\left(\begin{array}{ccc}
0.756 & \bullet & \bullet\\
0.106 & 0.793 & \bullet\\
0.134 & 0.106 & 0.756
\end{array}\right)$
&
$\left(\begin{array}{ccc}
1.000 & \bullet & \bullet\\
0.137 & 1.000 & \bullet\\
0.178 & 0.137 & 1.000
\end{array}\right)$  \\
\\
\bottomrule
\end{tabular*}
\par\end{centering}
\caption{\label{tab:AvarAcorr}Asymptotic covariance and correlation matrices
for $\hat{\varrho}$, $\hat{\phi}$ and $\hat{\gamma}$, for four
different correlation matrices. The diagonal elements of the asymptotic
variance matrix for $\hat{\phi}$ are all one, so it is also the asymptotic
correlation matrix for $\hat{\phi}.$ Because $\hat{\phi}$ is based
on an element-by-element transformation of the corresponding elements
of $\hat{\varrho}$, it is also the asymptotic correlation matrix
for $\hat{\varrho}$.}
\end{table}

The asymptotic variance and correlation matrices for the three vectors,
$\hat{\varrho}$, $\hat{\phi}$ and $\hat{\gamma}$, are reported
in Table \ref{tab:AvarAcorr}. The true correlation matrix is given
in the first column of Table \ref{tab:AvarAcorr}. The asymptotic
variance of the correlation coefficient, $\hat{\varrho}_{j}$, is
$(1-\varrho_{j}^{2})^{2}$, which defines the diagonal elements of
$\Omega_{\varrho}$, and the element-wise Fisher transformation ensures
that $\mathrm{avar}(\hat{\phi}_{j})=1$ for all $j=1,\dots,n$. However,
we observe a high degree of correlation across the elements of $\hat{\phi}$.
The asymptotic correlation matrix for $\hat{\phi}$ is, in fact, identical
to that of the empirical correlations, $\hat{\varrho}$, because the
Fisher transformation is an element-by-element transformation. Its
Jacobian, $D_{c}=\partial\phi/\partial\varrho$, is therefore a diagonal
matrix. Consequently, the asymptotic correlations are unaffected by
the element-wise Fisher transformation, and $\mathrm{acorr}(\hat{\varrho})=\mathrm{acorr}(\hat{\phi})$.
While the diagonal elements of $\Omega_{\phi}$ are invariant to $C$,
this is not the case for the diagonal elements of $\Omega_{\gamma}$,
but it is interesting to note that the asymptotic correlations between
elements of $\hat{\gamma}$ tend to be relatively small, and close
to zero when the correlations in $C$ are small.

Simulation results in the Web Appendix suggest that the elements of
$\hat{\gamma}$ tend to be weakly correlated, and that $\gamma(C)$
reduces skewness, as is the case for the Fisher transformation. Empirical
results in \citet{ArchakovHansenLundeMRG} show that the empirical
distribution of transformed realized correlation matrices is well
approximated by a Gaussian distribution. 

\section{Auxiliary Results and Properties\label{sec:Auxiliary-Results}}

\subsection{Structure for Certain Correlation Matrices\label{subsec:MatrixStructures}}

While the elements of $\gamma$ depend on the correlation matrix in
a nonlinear way, there are some interesting correlation structures
that do carry over to the matrix $G=\log C$, and hence $\gamma$.
First, we consider the case with an equicorrelation matrix and a block-equicorrelation
matrix.
\begin{prop}
\label{prop:EquiCorrelation} Suppose $C$ is an equicorrelation matrix
with correlation parameter $\rho$. Then, all the off-diagonal elements
of matrix $G=\log C$ are identical and equal to \setlength{\belowdisplayskip}{5pt} \setlength{\belowdisplayshortskip}{5pt} \setlength{\abovedisplayskip}{0pt} \setlength{\abovedisplayshortskip}{-10pt}
\begin{equation}
\gamma_{c}=-\tfrac{1}{n}\log\left(\tfrac{1-\rho}{1+\rho(n-1)}\right)=\tfrac{1}{n}\log(1+n\tfrac{\rho}{1-\rho})\in\mathbb{R},\label{eq:corrconst}
\end{equation}
so that $\gamma=\gamma_{c}\iota$, where $\iota\in\mathbb{R}^{n(n-1)/2}$
is the vector of ones, $\iota=(1,\ldots,1)^{\prime}$.
\end{prop}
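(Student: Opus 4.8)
The plan is to exploit the fact that an equicorrelation matrix lies in the two-dimensional commutative algebra spanned by the identity $I_{n}$ and the rank-one matrix $J_{n}:=\iota_{n}\iota_{n}^{\prime}$, where $\iota_{n}\in\mathbb{R}^{n}$ denotes the vector of ones. Writing $C=(1-\rho)I_{n}+\rho J_{n}$, I would first record the spectral structure of $C$: it has the simple eigenvalue $\lambda_{1}=1+(n-1)\rho$ with eigenvector $\iota_{n}$, and the eigenvalue $\lambda_{2}=1-\rho$ with multiplicity $n-1$, whose eigenspace is the orthogonal complement $\iota_{n}^{\perp}$. Since $C$ is a (non-singular) correlation matrix we have $\rho\in(-1/(n-1),1)$, so $\lambda_{1},\lambda_{2}>0$ and $\log C$ is well defined through the spectral calculus described in Section \ref{sec:Theoretical-Framework}; this also ensures the statement is not vacuous.

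The second step is to note that, because $\{I_{n},J_{n}\}$ is closed under multiplication ($J_{n}^{2}=nJ_{n}$) and any analytic function of $C$ acts as a scalar multiple of the identity on each of the two eigenspaces, $\log C$ must itself have the form $aI_{n}+bJ_{n}$ for scalars $a,b$. These are pinned down by matching eigenvalues: acting on a vector in $\iota_{n}^{\perp}$ gives $a=\log\lambda_{2}=\log(1-\rho)$, while acting on $\iota_{n}$ gives $a+nb=\log\lambda_{1}=\log(1+(n-1)\rho)$. Solving yields $b=\tfrac{1}{n}\bigl(\log(1+(n-1)\rho)-\log(1-\rho)\bigr)=\tfrac{1}{n}\log\tfrac{1+(n-1)\rho}{1-\rho}$.

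Finally, I would read off the off-diagonal entries of $\log C=aI_{n}+bJ_{n}$: since every entry of $J_{n}$ equals one, all off-diagonal entries equal $b$, so $\gamma_{c}=b$. The two displayed forms in \eqref{eq:corrconst} then follow from a sign flip and the algebraic identity $\tfrac{1+(n-1)\rho}{1-\rho}=1+n\tfrac{\rho}{1-\rho}$, and $\gamma=\gamma_{c}\iota$ is immediate from the definition $\gamma=\mathrm{vecl}(\log C)$. I do not anticipate a real obstacle; the only point deserving a line of care is the justification that $\log C$ remains in the span of $\{I_{n},J_{n}\}$, which one can argue either directly from the two-eigenspace decomposition above or by noting that on $\mathrm{spec}(C)=\{\lambda_{1},\lambda_{2}\}$ the logarithm agrees with an affine polynomial in $C$.
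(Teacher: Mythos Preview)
Your argument is correct and takes a genuinely different route from the paper's proof. The paper writes $C=(1-\rho)I_{n}+\rho U_{n}$ (with $U_{n}$ the matrix of ones), applies the Sherman--Morrison formula to obtain $C^{-1}=\tfrac{1}{1-\rho}\bigl(I_{n}-\tfrac{\rho}{1+(n-1)\rho}U_{n}\bigr)$, and then computes $\log C=-\log(C^{-1})$ by expanding the Taylor series of $\log(I_{n}-\varphi U_{n})$ and using $U_{n}^{k}=n^{k-1}U_{n}$ to collapse the series into a closed form. Your approach instead reads off the two eigenspaces of $C$ directly and determines $\log C=aI_{n}+bJ_{n}$ by matching eigenvalues; this is shorter, avoids both the Sherman--Morrison detour and the power-series manipulation, and sidesteps having to check the convergence radius of the logarithmic series (which, incidentally, requires some care in the paper's argument when $\rho$ is close to $-1/(n-1)$). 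The paper's route, on the other hand, makes the algebraic structure $U_{n}^{k}=n^{k-1}U_{n}$ explicit, which generalizes nicely to other power-series computations in the same commutative algebra. Either way the identification of the off-diagonal constant is the same, and your final algebraic checks for the two forms in \eqref{eq:corrconst} are correct.
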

This result, in conjunction with Theorem \ref{thm:OneToOne}, establishes
that $\gamma_{c}$ is a one-to-one correspondence from the set of
non-singular equicorrelation matrices to the real line, $\mathbb{R}$,
and the inverse mapping is given in closed-form by $\rho(\gamma_{c},n)=\frac{1-e^{-n\gamma_{c}}}{1+(n-1)e^{-n\gamma_{c}}}$.
It follows that $\rho(\gamma_{c},n)$ is confined to the interval
$\bigl(-\frac{1}{n-1},1\bigl)$.

It is easy to verify that if $C$ is a block diagonal matrix, with
equicorrelation diagonal blocks and zero correlation across blocks,
then $\log C$ will have the same block structure, and (\ref{eq:corrconst})
can be used to compute the elements in $\gamma$. In the more general
case where $C$ is a block correlation matrix, then it can be shown
that the logarithmic transformation preserves the block structure.
This is used in \citet{ArchakovHansenLundeMRG} in a multivariate
GARCH model. So that $\log C$ has the same block structure as $C$,
and this transformation provides a simple way to model block correlation
matrices. We illustrate this with the following example\begin{small}\renewcommand*{\arraystretch}{.65}\setlength{\belowdisplayskip}{5pt}  \setlength{\abovedisplayskip}{5pt} 
\[
C=\left(\begin{array}{cccccc}
1.0 & {\color{teal}0.4} & {\color{teal}0.4} & {\color{purple}0.2} & {\color{purple}0.2} & {\color{purple}0.2}\\
{\color{teal}0.4} & 1.0 & {\color{teal}0.4} & {\color{purple}0.2} & {\color{purple}0.2} & {\color{purple}0.2}\\
{\color{teal}0.4} & {\color{teal}0.4} & 1.0 & {\color{purple}0.2} & {\color{purple}0.2} & {\color{purple}0.2}\\
{\color{purple}0.2} & {\color{purple}0.2} & {\color{purple}0.2} & 1.0 & {\color{blue}0.6} & {\color{blue}0.6}\\
{\color{purple}0.2} & {\color{purple}0.2} & {\color{purple}0.2} & {\color{blue}0.6} & 1.0 & {\color{blue}0.6}\\
{\color{purple}0.2} & {\color{purple}0.2} & {\color{purple}0.2} & {\color{blue}0.6} & {\color{blue}0.6} & 1.0
\end{array}\right)\Leftrightarrow\log C=\left(\begin{array}{cccccc}
-.16 & {\color{teal}.349} & {\color{teal}.349} & {\color{purple}.104} & {\color{purple}.104} & {\color{purple}.104}\\
{\color{teal}.349} & -.16 & {\color{teal}.349} & {\color{purple}.104} & {\color{purple}.104} & {\color{purple}.104}\\
{\color{teal}.349} & {\color{teal}.349} & -.16 & {\color{purple}.104} & {\color{purple}.104} & {\color{purple}.104}\\
{\color{purple}.104} & {\color{purple}.104} & {\color{purple}.104} & -.36 & {\color{blue}.553} & {\color{blue}.553}\\
{\color{purple}.104} & {\color{purple}.104} & {\color{purple}.104} & {\color{blue}.553} & -.36 & {\color{blue}.553}\\
{\color{purple}.104} & {\color{purple}.104} & {\color{purple}.104} & {\color{blue}.553} & {\color{blue}.553} & -.36
\end{array}\right).
\]
\end{small}Another interesting class of correlation matrices are
the Toeplitz-correlation matrices, which arise in some models, such
as stationary time series models. For this case, $\log C$ is a bisymmetric
matrix.

\subsection{The Inverse  and other Powers of the Correlation Matrix}

Since $C^{\alpha}=e^{\alpha G}$, it is possible to obtain powers
of $C$ from $\gamma$. For instance, the inverse covariance matrix
is given by $\Sigma^{-1}=\Lambda^{-1}e^{-G}\Lambda^{-1}$, where $\Lambda=\mathrm{diag}(\sigma_{1},\ldots,\sigma_{n})$.
The inverse is, for instance, of interest for computing the partial
correlation coefficients and in portfolio choice problems. Some estimation
methods impose sparsity on $\Sigma^{-1}$. While it is not simple
to impose sparsity on $\Sigma^{-1}$ through $\gamma$, the new parametrization
facilitate new ways to impose a parsimonious structure on $\Sigma$
or $\Sigma^{-1}$, by imposing sparsity (or some other structure)
on $\gamma$ directly. 

\subsection{The Jacobian $\partial\varrho/\partial\gamma$}

Next we establish a result that shows that $\partial\varrho/\partial\gamma=\partial\mathrm{vecl}[C]/\partial\mathrm{vecl}[G]$
has a relatively simple expression. This is convenient for inference,
such as computation of standard errors, and for the construction of
dynamic GARCH-type models, such as a score-driven model for $\gamma=\mathrm{vecl}G$,
see \citet{CrealKoopmanLucasGAS_JAE}, and for the construction of
parameter stability tests, such as that of \citet{Nyblom89}. 
\begin{prop}
\label{prop:dC/dG}We have $\tfrac{\partial\varrho}{\partial\gamma}=E_{l}\Bigl(I-AE_{d}^{\prime}\Bigl(E_{d}AE_{d}^{\prime}\Bigl){}^{-1}E_{d}\Bigl)A(E_{l}+E_{u})^{\prime}$,
where $A=\partial\mathrm{vec}C/\partial\mathrm{vec}G$ and the matrices
$E_{l}$, $E_{u}$ and $E_{d}$ are elimination matrices, such that
$\mathrm{vecl}M=E_{l}\mathrm{vec}M$, $\mathrm{vecl}M^{\prime}=E_{u}\mathrm{vec}M$
and $\mathrm{diag}M=E_{d}\mathrm{vec}M$ for any square matrix $M$
of the same size as $C$. 
\end{prop}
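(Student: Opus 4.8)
The plan is to regard $G=\log C$ as depending on the free parameter $\gamma=\mathrm{vecl}(G)$ together with the (constrained) diagonal $d=\mathrm{diag}(G)$, and then to eliminate $d$ using the requirement that $C=e^{G}$ have unit diagonal. Since $G$ is symmetric, $\mathrm{vec}(G)=(E_{l}+E_{u})^{\prime}\gamma+E_{d}^{\prime}d$, so differentiating $C=e^{G}$ and using $A=\partial\mathrm{vec}(C)/\partial\mathrm{vec}(G)$ gives
\[
\partial\mathrm{vec}(C)=A\,\partial\mathrm{vec}(G)=A(E_{l}+E_{u})^{\prime}\partial\gamma+AE_{d}^{\prime}\partial d .
\]
Since $\varrho=\mathrm{vecl}(C)=E_{l}\mathrm{vec}(C)$, the proposition will follow once $\partial d$ is expressed in terms of $\partial\gamma$.

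Before differentiating I would record that $C(\gamma)$ is well defined and smooth. Theorem \ref{thm:OneToOne} gives, for each $\gamma$, a unique $d=d(\gamma)$ with $\mathrm{diag}(e^{G})=\iota$, where $\iota$ is the $n$-vector of ones; smoothness of $\gamma\mapsto d(\gamma)$ follows from the implicit function theorem applied to $F(\gamma,d)=\mathrm{diag}(e^{G(\gamma,d)})-\iota$, provided the block $\partial F/\partial d=E_{d}AE_{d}^{\prime}$ is nonsingular. This is the one point that needs an argument: $A$ is the Fréchet derivative of the matrix exponential at the symmetric matrix $G$, which in vectorized form equals $\int_{0}^{1}e^{(1-s)G}\otimes e^{sG}\,ds$ and is therefore symmetric positive definite, so $E_{d}AE_{d}^{\prime}\succ0$ because $E_{d}$ has full row rank. (Equivalently, $E_{d}AE_{d}^{\prime}=I-\nabla g$ in the notation of the proof of Lemma \ref{lem:gContraction}, and the contraction property excludes the eigenvalue $1$.)

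With nonsingularity in hand, differentiate the identity $E_{d}\mathrm{vec}(C)=\iota$ to obtain $E_{d}A(E_{l}+E_{u})^{\prime}\partial\gamma+E_{d}AE_{d}^{\prime}\partial d=0$, hence $\partial d=-(E_{d}AE_{d}^{\prime})^{-1}E_{d}A(E_{l}+E_{u})^{\prime}\partial\gamma$. Substituting this into the display above and left-multiplying by $E_{l}$ yields
\[
\tfrac{\partial\varrho}{\partial\gamma}=E_{l}\Bigl(I-AE_{d}^{\prime}(E_{d}AE_{d}^{\prime})^{-1}E_{d}\Bigr)A(E_{l}+E_{u})^{\prime},
\]
which is the asserted formula. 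Apart from the nonsingularity of $E_{d}AE_{d}^{\prime}$ — the main obstacle, handled by the positive-definiteness of $A$ — the argument is routine bookkeeping with the elimination matrices, relying only on $\mathrm{vecl}M=E_{l}\mathrm{vec}M$, $\mathrm{vecl}M^{\prime}=E_{u}\mathrm{vec}M$, $\mathrm{diag}M=E_{d}\mathrm{vec}M$, and the representation $\mathrm{vec}(G)=(E_{l}+E_{u})^{\prime}\gamma+E_{d}^{\prime}d$ for symmetric $G$.
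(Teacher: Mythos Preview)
Your argument is correct and follows the same route as the paper: parametrize $\mathrm{vec}(G)$ by $(\gamma,d)$, differentiate $C=e^{G}$ via $A$, use the unit-diagonal constraint and the implicit function theorem to solve for $\partial d$ in terms of $\partial\gamma$, and substitute back. The paper establishes invertibility of $E_{d}AE_{d}^{\prime}$ by noting it is a principal submatrix of the positive definite matrix $A$ (obtained from the spectral representation), whereas you invoke the integral form of the Fr\'echet derivative; both are fine. One small quibble: your parenthetical identification $E_{d}AE_{d}^{\prime}=I-\nabla g$ holds only at the fixed point $x^{\ast}$, where $D=I$; in general $E_{d}AE_{d}^{\prime}=D(I-\nabla g)$.
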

The matrix, $A$, is the same matrix that appeared in the asymptotic
distribution for $\hat{\gamma}$, see (\ref{eq:avar-gamma}). In the
Web Appendix we compute $\partial\varrho/\partial\gamma$ for two
correlation matrices: A $10\times10$ Toeplitz correlation matrix
and one based on the empirical correlation matrix for the 10 industry
portfolios in the Kenneth R. French data library. The two have a very
similar structure.

\subsection{Results for Covariance Matrices with Known Diagonal Elements\label{subsec:Results-for-Covariance-known-diag}}

Some of our results for correlation matrices, apply equally to covariance
matrices with known diagonal elements, and these could be useful in
some applications that involve the matrix logarithm of covariance
matrices. In Corollary \ref{cor:CovarianceMatrix} we state the extensions
to this situation.
\begin{cor}
\label{cor:CovarianceMatrix}For any real symmetric matrix, $A\in\mathbb{R}^{n\times n}$,
and any vector, $v\in\mathbb{R}^{n}$ with strictly positive elements,
there exists a unique vector, $x^{\ast}\in\mathbb{R}^{n}$, such that
$\Sigma=e^{A[x^{\ast}]}$ is a covariance matrix with diagonal $\mathrm{diag}(\Sigma)=v$.
Moreover, $x^{\ast}=\lim_{k\rightarrow\infty}x_{(k)}$, where $x_{(k+1)}=x_{(k)}+[\log v-\log\mathrm{diag}(e^{A[x_{(k)}]})]$,
for $k=0,1,2,\ldots$, with an arbitrary initial vector $x_{(0)}\in\mathbb{R}^{n}$. 
\end{cor}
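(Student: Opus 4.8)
The plan is to deduce Corollary \ref{cor:CovarianceMatrix} from Theorem \ref{thm:OneToOne} and Corollary \ref{cor:Algorithm} by a simple rescaling argument, rather than redoing the fixed-point analysis from scratch. The key observation is that a covariance matrix $\Sigma$ with prescribed diagonal $v=(v_1,\ldots,v_n)^\prime$ can be written as $\Sigma = D\,C\,D$, where $D=\mathrm{diag}(\sqrt{v_1},\ldots,\sqrt{v_n})$ and $C$ is a correlation matrix. So the existence and uniqueness of $x^\ast$ with $e^{A[x^\ast]}$ having diagonal $v$ should follow from the existence and uniqueness of the correlation-matrix case applied to a suitably shifted symmetric matrix.

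First I would make the reduction precise. Write $A[x] = A[0] + \mathrm{diag}(x)$ for the operation of overwriting the diagonal, where $A[0]$ denotes $A$ with its diagonal zeroed out; then $A[x] = A[0] + \mathrm{diag}(x)$ and in particular $A[x + c] = A[x] + \mathrm{diag}(c)$ for any $c\in\mathbb{R}^n$. The matrix $e^{A[x^\ast]}$ has diagonal equal to $v$ if and only if $\log\mathrm{diag}(e^{A[x^\ast]}) = \log v$ (element-wise), which rearranges to the fixed-point equation $x^\ast = x^\ast + [\log v - \log\mathrm{diag}(e^{A[x^\ast]})]$. Define $h(x) = x + \log v - \log\mathrm{diag}(e^{A[x]})$. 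Comparing with the map $g(x) = x - \log\mathrm{diag}(e^{A[x]})$ from the proof of Theorem \ref{thm:OneToOne}, we have $h(x) = g(x) + \log v$, i.e. $h$ is simply $g$ composed with a fixed translation. Hence $\nabla h = \nabla g$, so Lemma \ref{lem:gContraction} immediately gives that $h$ is a contraction (with the same contraction constant), and the Banach fixed-point theorem yields a unique $x^\ast$ with $h(x^\ast)=x^\ast$, which is exactly the desired covariance-matrix condition. The iteration $x_{(k+1)}=h(x_{(k)})$ is precisely the stated recursion, and it converges to $x^\ast$ from any starting point, again by Banach.

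It remains only to confirm that the resulting $\Sigma = e^{A[x^\ast]}$ is a genuine covariance matrix, i.e. symmetric positive definite: symmetry is clear since $A[x^\ast]$ is symmetric, and positive definiteness holds because the matrix exponential of any real symmetric matrix is symmetric positive definite. Combined with $\mathrm{diag}(\Sigma)=v$, this is the claim. Alternatively, and perhaps more transparently for the reader, one can give the argument via rescaling: set $x^\ast = \tilde{x}^\ast + \log v$, where $\tilde{x}^\ast$ is the vector from Theorem \ref{thm:OneToOne} applied to the symmetric matrix $\tilde{A} := A[0] + \mathrm{diag}(\log v)$ (so $A[\tilde x^\ast] = \tilde A[\tilde x^\ast - \log v]$ — care with bookkeeping here), and check that $e^{A[x^\ast]} = D\, C\, D$ with $C = e^{\tilde A[\,\cdot\,]}$ the correlation matrix guaranteed by the theorem; uniqueness transfers back through the bijection $\Sigma \leftrightarrow C$ for fixed $D$.

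The only mild obstacle is the index bookkeeping in the ``overwrite the diagonal'' operator $A[\cdot]$: one must be careful that translating the argument $x$ by a constant vector translates the diagonal of $A[x]$ by the same constant and leaves the off-diagonal part untouched, so that $h$ and $g$ really do differ by a constant and share a Jacobian. Once that identity is in place, no new analysis is needed — the result is a corollary of Lemma \ref{lem:gContraction} in the strict sense. I would therefore present the short translation argument as the main proof and perhaps remark on the $D C D$ interpretation for intuition.
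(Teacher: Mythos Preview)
Your main argument is correct and is exactly the intended one: the paper does not spell out a separate proof of this corollary, treating it as an immediate consequence of Lemma~\ref{lem:gContraction} and the Banach fixed-point theorem. Observing that $h(x)=g(x)+\log v$ differs from $g$ by a constant translation, so that $\nabla h=\nabla g$ and $h$ is a contraction with the same Lipschitz constant, is precisely the one-line deduction the paper has in mind.

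One caution: drop the alternative ``rescaling'' paragraph. The identity $\Sigma=DCD$ with $D=\mathrm{diag}(\sqrt{v_1},\ldots,\sqrt{v_n})$ is of course true, but it does \emph{not} yield a direct reduction to Theorem~\ref{thm:OneToOne} in the way you suggest, because the matrix logarithm does not split over products of non-commuting matrices: in general $\log(DCD)\neq\log D+\log C+\log D$, so the off-diagonal of $\log\Sigma$ is not the off-diagonal of $\log C$. Relatedly, since $A[\cdot]$ overwrites the diagonal, your $\tilde A:=A[0]+\mathrm{diag}(\log v)$ satisfies $\tilde A[y]=A[y]$ for every $y$, so introducing $\tilde A$ changes nothing. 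The $DCD$ picture is good intuition for why one might expect such a result, but it is not an alternative proof; the translation argument is the actual proof and stands on its own.
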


\section{Properties of the Algorithm for the Inverse Mapping, $C(\gamma)$\label{sec:Algorithm}}

The algorithm that reconstructs the correlation matrix, $C$, from
$\gamma$ converges exponentially fast, and its complexity is of order
$O(n^{3}\log n)$. This follows, as we show below, from the fact that
the number of required iterations is of order $\log n$, and because
each iteration entails a matrix exponential evaluation which is of
order $O(n^{3})$, see e.g. \citet{Lu:1998}.

Let $K_{\delta}=\inf\{k:||x_{(k+1)}-x_{(k)}||_{p}\leq\delta\}$ be
the number of iterations required for convergence for some $p$-norm
and some threshold $\delta>0$. From the contraction property  it
follows that $||x_{(k+1)}-x_{(k)}||_{p}\leq L||x_{(k)}-x_{(k-1)}||_{p}\leq L^{k}||x_{(1)}-x_{(0)}||_{p}$,
for $k=1,2,\ldots$, where $L\in[0,1)$ is the Lipschitz constant
given from the contraction. So the number of iterations $k$ can be
bounded from above by $k\leq c_{L}(\log||x_{(1)}-x_{(0)}||_{p}-\log||x_{(k)}-x_{(k-1)}||_{p})$,
where $c_{L}=-\tfrac{1}{\log L}>0$ depends on the Lipschitz constant.
Since $||x||_{p}\leq(n\cdot\max_{1\leq i\leq n}|x^{(i)}|^{p})^{1/p}=n^{1/p}||x||_{\infty}$,
we have

\begin{equation}
K_{\delta}\leq c_{L}(\tfrac{\log n}{p}+\log||x_{(1)}-x_{(0)}||_{\infty}-\log\delta)=O(\log n).\label{eq:IterationBound}
\end{equation}
Note that the number of required iterations may be more sensitive
to the structure of $C$ (through the Lipschitz constant) than the
dimension of $C$. The Lipschitz constant approaches one as $C$ approaches
singularity.  The number of iterations is less sensitive to the choice
of initial vector $x_{(0)}$, but it is useful to know that the elements
of $x^{\ast}$ are non-positive.
\begin{lem}
\label{lem:DiagonalA}The diagonal elements of $\log C$ are non-positive
for any $C\in\mathcal{C}_{n}$.
\end{lem}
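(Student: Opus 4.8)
The plan is to use the spectral decomposition of $G=\log C$ together with Jensen's inequality. Since $C\in\mathcal{C}_{n}$ is symmetric positive definite, $G=\log C$ is a well-defined real symmetric matrix, so we may write $G=Q\Lambda Q^{\prime}$ with $Q$ orthonormal, $Q^{\prime}Q=I$, and $\Lambda=\mathrm{diag}(\lambda_{1},\ldots,\lambda_{n})$ collecting the (real) eigenvalues. Fix an index $i$. Reading off the $(i,i)$ entry gives $G_{ii}=\sum_{k=1}^{n}\lambda_{k}q_{ik}^{2}$, and likewise, since $C=e^{G}=Q\,\mathrm{diag}(e^{\lambda_{1}},\ldots,e^{\lambda_{n}})\,Q^{\prime}$ and $C$ has unit diagonal, $1=C_{ii}=\sum_{k=1}^{n}e^{\lambda_{k}}q_{ik}^{2}$.

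The key observation is that the weights $p_{k}:=q_{ik}^{2}$ are nonnegative and sum to one: this is exactly the statement that the $i$-th row of the orthonormal matrix $Q$ has unit Euclidean norm, i.e.\ $\sum_{k}q_{ik}^{2}=(QQ^{\prime})_{ii}=1$. Hence $(p_{1},\ldots,p_{n})$ is a probability vector. Applying Jensen's inequality to the convex function $x\mapsto e^{x}$ then yields
\[
\exp\!\Bigl(\sum_{k=1}^{n}p_{k}\lambda_{k}\Bigr)\;\le\;\sum_{k=1}^{n}p_{k}e^{\lambda_{k}}\;=\;1,
\]
so that $G_{ii}=\sum_{k}p_{k}\lambda_{k}\le\log 1=0$. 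Since $i$ was arbitrary, all diagonal elements of $\log C$ are non-positive.

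There is essentially no hard step here; the only things to be careful about are that $\log C$ is genuinely real and symmetric (guaranteed by $C$ being symmetric positive definite, as recalled in Section~\ref{sec:Theoretical-Framework}) and that the weights $p_{k}$ sum to one rather than merely being nonnegative, which is what makes Jensen applicable. One can also note, as a remark, that equality $G_{ii}=0$ forces all $\lambda_{k}$ with $q_{ik}\neq0$ to coincide, and if this holds for every $i$ one recovers $C=I$; this optional observation sharpens the statement but is not needed for the lemma. An alternative route avoiding the spectral picture is to consider $h(t)=(e^{tG})_{ii}$, note $h(0)=1$, $h(1)=1$, $h'(0)=G_{ii}$, and use convexity of $h$ (which follows from the same eigenvalue expansion $h(t)=\sum_{k}e^{t\lambda_{k}}q_{ik}^{2}$, a nonnegative combination of convex functions) to conclude $h'(0)\le h(1)-h(0)=0$; this is the same argument in disguise.
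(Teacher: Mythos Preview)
Your proof is correct and is essentially the same argument as the paper's: both use the common eigenbasis of $C$ and $G=\log C$, observe that the squared row entries of $Q$ form a probability vector, and apply Jensen's inequality. The only cosmetic difference is that the paper diagonalizes $C$ and applies Jensen to the concave function $\log$, whereas you diagonalize $G$ and apply Jensen to the convex function $\exp$; these are equivalent formulations.
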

The result in (\ref{eq:IterationBound}) is illustrated in Figure
\ref{fig:conv1} where we recover the correlation matrix from $\gamma$
using the algorithm in Corollary \ref{cor:Algorithm}. The true $C$
has a Toeplitz structure, $C_{ij}=\rho^{|i-j|}$, $i,j=1,\ldots,n$,
for $n=3,\ldots,100$ and $\rho\in\{0.5,0.9,0.99\}$. The number of
iterations needed for $||x_{(k)}-x_{(k-1)}||_{2}<\delta=10^{-8}\sqrt{n}$
increases with the dimension at a rate that is consistent with $\log n$.
The number of iterations is sensitive to the correlation structure.
For instance, when $C$ is almost singular ($\rho=0.99$), the number
of iterations is about five times that of a moderately correlated
correlation matrix ($\rho=0.5$). The reason is that a near zero eigenvalue
of $C$ translates into a Lipschitz constant close to one. To illustrate
the sensitivity to the starting value, we use 1,000 different starting
values, $x_{(0)}$, where the elements of $x_{(0)}$ are drawn independently
from the negative half-normal distribution with scale $\sigma=10$
(i.e. $-|Z|$ with $Z\sim N(0,100)$). The shaded bands depict the
dispersion in the number of iterations (average $\pm2$ standard deviations).
The dispersion is relatively modest which verifies that the algorithm
is relatively insensitive to the initial value, $x^{(0)}$.
\begin{figure}[H]
\begin{centering}
\includegraphics[scale=0.7]{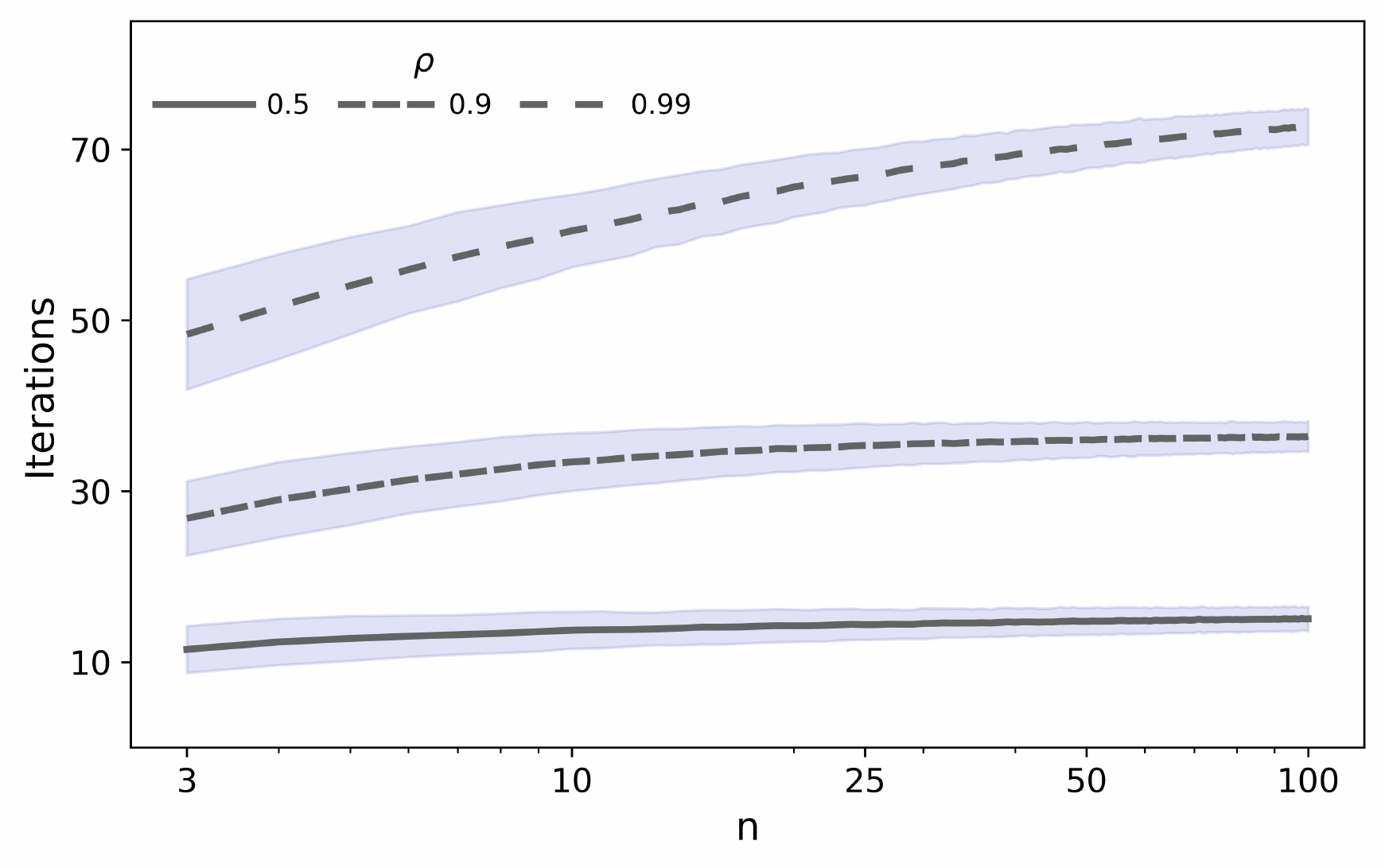}
\par\end{centering}
\caption{{\small{}Number of iterations needed for convergence at threshold
$\delta=10^{-8}\sqrt{n}$, when }$C_{ij}=\rho^{|i-j|}${\small{} }$i,j=1,\ldots,n$,
for $n=3,\ldots,100${\small{}, using random initial value, $x_{(0)}$.
Black lines correspond to the average number of iterations required
for convergence, and the shaded bands ($\pm$2 standard deviations)
show the variation resulting from the different starting values.}\label{fig:conv1}}
\end{figure}

The results in Figure \ref{fig:conv1} are not specific to the Toeplitz
structure for $C$. In a second design, we generate 50,000 distinct
correlation matrices for each of the dimensions, $n\in\{5,10,25\}$.
This is done by generating random vectors, $\gamma$, where each element
in $\gamma$ is uniformly distributed on the interval $[-b_{n},b_{n}]$.
The constant, $b_{n}$, is chosen to provide a sufficiently wide range
of the smallest eigenvalue of $C$, denoted $\lambda_{\min}$, and
the spectral radius of $\nabla g(x^{\ast})$, denoted $\nu_{\max}$.
The Lipschitz constant for the contraction, $g(x)$, is approximately
equal to $\nu_{\max}$, so we should expect $-1/\log\nu_{\max}\simeq c_{L}$
to be linearly related to (the bound on) the number of iterations. 

The number of iterations needed for convergence is shown in Figure
\ref{fig:conv_spectrum}, for $n=5$, $n=10$, and $n=25$, using
scatter plots against three characteristics of $C$. The starting
value is $x_{(0)}=0\in\mathbb{R}^{n}$ in all simulations and $\delta=10^{-8}\sqrt{n}$
was used as the tolerance level. 

The left panels reveal a fairly tight linear relationship between
the number of iterations and $-1/\log\nu_{\max}$ ($\approx c_{L}$).
Similarly, $\lambda_{\max}$ and $\gamma_{\max}$, which are easier
to compute, are also related to the number of iterations, albeit not
as tightly as $\nu_{\max}$.
\begin{figure}[H]
\begin{centering}
\includegraphics[scale=0.7]{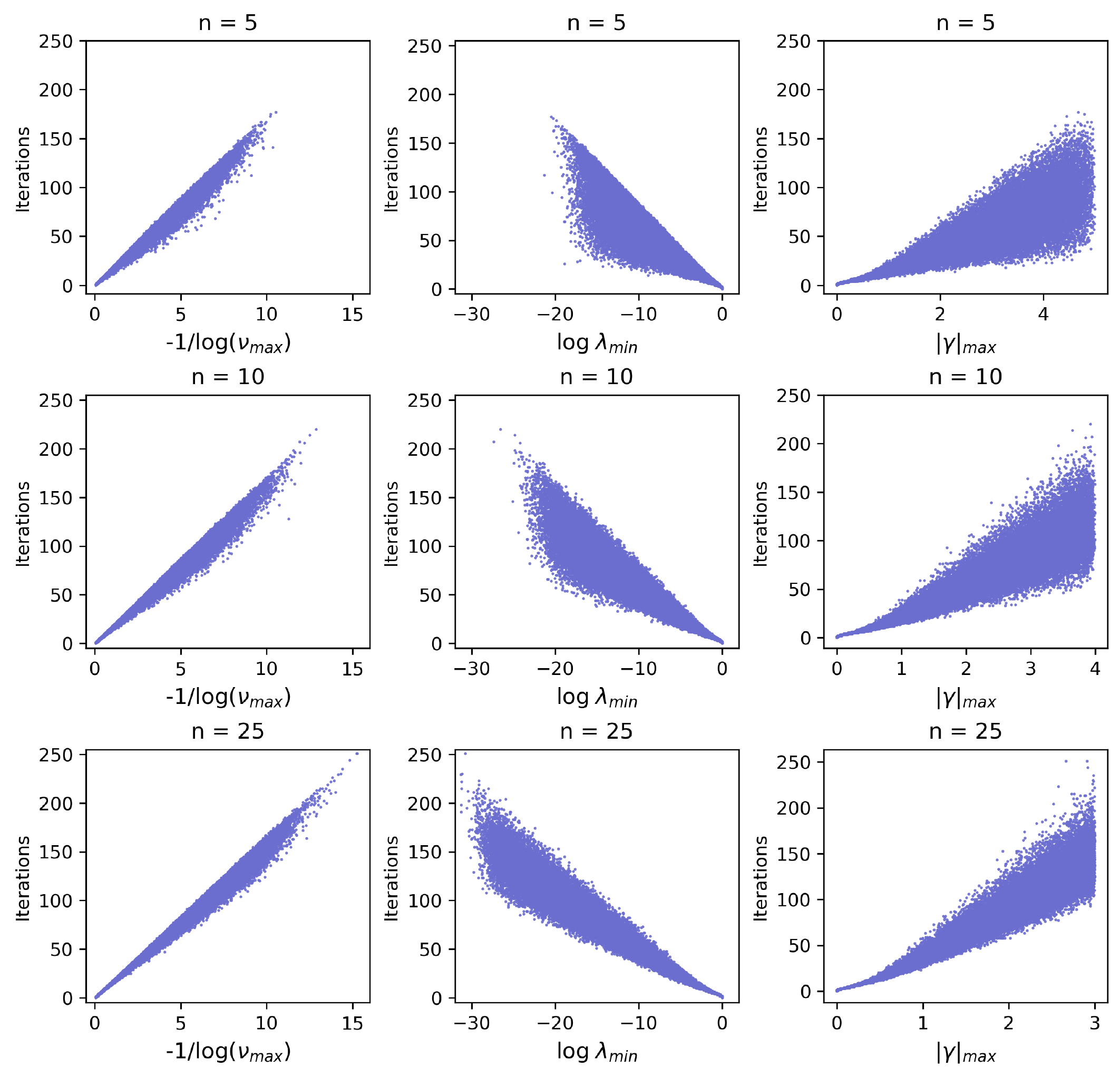}
\par\end{centering}
\caption{The number of iterations needed for convergence plotted against three
characteristics of $C$. The left panels plots the number of iterations
against $-1/\log\nu_{\max}\simeq c_{L}$. The smallest eigenvalue
of $C$ (middle panels) and the largest $|\gamma_{i}|$ (right panels)
are also useful indicators.\label{fig:conv_spectrum}}
\end{figure}

\section{Concluding Remarks\label{sec:Concluding-Remarks}}

In this paper, we have shown that the space of non-singular $n\times n$
correlation matrices is one-to-one with $\mathbb{R}^{n(n-1)/2}$.
A non-singular covariance matrix can therefore be parametrized by
the $n$ (log-)variances and the vector, $\gamma(C)$, which has unrestricted
domain in $\mathbb{R}^{n(n-1)/2}$. This opens new ways to model correlation
and covariance matrices where positive definiteness is an intrinsic
property. For instance, in multivariate GARCH models, as explored
in \citet{ArchakovHansenLundeMRG}. The transformation can be used
to specify probability distributions on correlation and covariance
matrices. Any distribution on $\mathbb{R}^{n(n-1)/2}$ induces a distribution
on the space of positive definite correlation matrices, $\mathcal{C}$.
This could be used in multivariate stochastic volatility modeling,
and defines a new approach to specifying Bayesian priors on $\mathcal{C}$.

We have derived results for the asymptotic distribution of $\gamma(\hat{C})$.
Much is known about the finite sample properties when $n=2$, because
$\gamma(C)$ is identical to the Fisher transformation in this case.
The Fisher transformation has variance stabilizing and skewness eliminating
properties. The variance stabilizing property does not carry over
to the case $n>2$. However, simulation results suggest that it continues
to have skewness reducing properties, and that the empirical distribution
of $\gamma(\hat{C})$ (in a classical setting) is well approximated
by a Gaussian distribution even in small samples. Moreover, the elements
of $\gamma(\hat{C})$ tend to be weakly dependent, as suggested by
the asymptotic results in Table \ref{tab:AvarAcorr}. This makes the
transformation potentially useful for regularization, see \citet{Pourahmadi2011},
and inference. These attributes tend to deteriorate as $C$ approaches
singularity. This is not unexpected, because it is also true for the
Fisher transformation when the correlation is close to $\pm1$.

The inverse mapping, $C(\gamma)$ is not given in closed-form when
$n>2$, except in some special cases. Instead, we proposed a fast
algorithm to evaluate $C(\gamma)$, and showed that its numerical
complexity is of order $O(n^{3}\log n)$, where $n\times n$ is the
dimension of $C$.%

{\footnotesize{}\bibliographystyle{agsm}
\bibliography{prh}
}{\footnotesize\par}

\appendix

\section*{Appendix of Proofs}

\setcounter{equation}{0}\renewcommand{\theequation}{A.\arabic{equation}}
\setcounter{thm}{0}\renewcommand{\thethm}{A.\arabic{thm}}
\setcounter{lem}{0}\renewcommand{\thelem}{A.\arabic{lem}}
\setlength{\belowdisplayskip}{5pt} \setlength{\belowdisplayshortskip}{5pt} \setlength{\abovedisplayskip}{5pt} \setlength{\abovedisplayshortskip}{5pt}

We prove $g$ is a contraction by deriving its Jacobian, $J(x)$,
and showing that all its eigenvalues are less than one in absolute
value. Since $g(x)=x-\log\delta(x),$ where $\delta(x)=\text{diag}(e^{G[x]})$,
an intermediate step towards the Jacobian for $g$, is to derive the
Jacobian for $\delta(x)$. To simplify notation, we sometimes suppress
the dependence on $x$ for some terms. For instance, we sometimes
write $\delta_{i}$ to denote the $i$-th element of the vector $\delta(x)$.
It follows that $[J(x)]_{i,j}=\frac{\partial[g(x)]_{i}}{\partial x_{j}}=1_{\{i=j\}}-\frac{1}{\delta_{i}}\frac{\partial[\delta(x)]_{i}}{\partial x_{j}},$
so that $J(x)=I-[D(x)]^{-1}H(x)$, where $D(x)=\mathrm{diag}(\delta_{1},\ldots,\delta_{n})$
is a diagonal matrix and $H(x)$ is the Jacobian matrix of $\delta(x)$,
derived below.

Let $G[x]=Q\Lambda Q^{\prime}$, where $\Lambda$ is the diagonal
matrix with the eigenvalues, $\lambda_{1},\ldots,\lambda_{n}$, of
$G[x]$ and $Q$ is an orthonormal matrix (i.e. $Q^{\prime}=Q^{-1}$)
with the corresponding eigenvectors. From \citet{LintonMcCrorie:1995},
we have $\mathrm{d}\text{vec}\,e^{G[x]}=A(x)\,\mathrm{d}\text{vec}G[x]$,
where
\begin{equation}
A(x)=(Q\otimes Q)\Xi\bigl(Q\otimes Q\bigl)^{\prime},\label{eq:JacobianA-expression}
\end{equation}
is and $n^{2}\times n^{2}$ matrix and $\Xi$ is the $n^{2}\times n^{2}$
diagonal matrix with elements given by
\begin{equation}
\xi_{ij}=\Xi_{(i-1)n+j,(i-1)n+j}=\begin{cases}
e^{\lambda_{i}}, & \text{if}\qquad\lambda_{i}=\lambda_{j}\\
\tfrac{e^{\lambda_{i}}-e^{\lambda_{j}}}{\lambda_{i}-\lambda_{j}}, & \text{if}\qquad\lambda_{i}\neq\lambda_{j}
\end{cases}\label{eq:xi_defined}
\end{equation}
for $i=1,\ldots,n$ and $j=1,\ldots,n$. Evidently, we have $\xi_{ij}=\xi_{ji}$,
for all $i$ and $j$. Moreover, $A(x)$ is a symmetric positive definite
matrix, because all the diagonal elements of $\Xi$ are strictly positive. 

Our interest concerns $\delta(x)=\text{diag}[e^{G[x]}]$ (a subset
of the elements of $\text{vec}[e^{G[x]}]$) so the Jacobian of $\delta(x)$,
denoted $H(x)$, is a principal sub-matrix of $A(x)$, defined by
the elements $[A(x)]_{l,m}$, $l,m=(i-1)n+i$, for $i=1,\ldots,n$.
Thus
\begin{align}
[H(x)]_{i,j} & =\frac{\partial\delta(x)_{i}}{\partial x_{j}}=(e_{i}\otimes e_{i})^{\prime}\bigl(Q\otimes Q\bigl)\Xi\bigl(Q\otimes Q)^{\prime}(e_{j}\otimes e_{j})\nonumber \\
 & =\bigl(e_{i}^{\prime}Q\otimes e_{i}^{\prime}Q\bigl)\Xi\bigl(Q^{\prime}e_{j}\otimes Q^{\prime}e_{j}\bigl)=\bigl(Q_{i,.}\otimes Q_{i,.}\bigl)\Xi\bigl(Q_{j,.}\otimes Q_{j,.}\bigl)^{\prime}\label{eq:H_ij}\\
 & =\sum_{k=1}^{n}\sum_{l=1}^{n}q_{ik}q_{jk}q_{il}q_{jl}\xi_{kl},\nonumber 
\end{align}
where $e_{i}$ is a $n\times1$ unit vector with one at the $i$-th
position and zeroes otherwise and $Q_{i,.}$ denotes the $i$-th row
of $Q$.

Interestingly, the Jacobian of $g$ is such that $J(x)\iota=0$, so
that the vector of ones, $\iota$, is an eigenvector of $J(x)$ associated
with the eigenvalue $0$, i.e. $J(x)$ has reduced rank. Because the
$i$-th row of $J(x)$ times $\iota$ reads
\[
1-\sum_{j=1}^{n}\frac{1}{\delta_{i}}\sum_{k=1}^{n}\sum_{l=1}^{n}q_{ik}q_{jk}q_{il}q_{jl}\xi_{kl}=1-\frac{1}{\delta_{i}}\sum_{k=1}^{n}\sum_{l=1}^{n}q_{ik}q_{il}\xi_{kl}\sum_{j=1}^{n}q_{jk}q_{jl}=1-\frac{1}{\delta_{i}}\sum_{k=1}^{n}q_{ik}^{2}\xi_{kk}=0,
\]
due to $\sum_{k=1}^{n}q_{ik}q_{jk}=1_{\{i=j\}}$. 

\subsection*{Proof that $g$ is a Contraction: Lemma \ref{lem:gContraction}}

We now want to prove that the mapping $g(x)$ is a contraction. In
order to show this, it is sufficient to demonstrate that all eigenvalues
of the corresponding Jacobian matrix $J(x)$ are below one in absolute
values for any real vector $x$. First we establish a number of intermediate
results.
\begin{lem}
\label{lem:TwoFunctions}$(i)$ $e^{y}-y-1>0$ for all $y\neq0$,
and $(ii)$ $1+e^{y}-\frac{2}{y}(e^{y}-1)>0$ for $y\neq0$.
\end{lem}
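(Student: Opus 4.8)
The plan is to treat each part by reducing it to a statement about a strictly monotone (or strictly convex) auxiliary function whose value and derivative at $y=0$ are known.

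For part $(i)$, I would set $f(y)=e^{y}-y-1$ and observe that $f(0)=0$ and $f'(y)=e^{y}-1$, which is negative for $y<0$ and positive for $y>0$. Hence $f$ is strictly decreasing on $(-\infty,0]$ and strictly increasing on $[0,\infty)$, so $y=0$ is the unique global minimizer and $f(y)>f(0)=0$ for every $y\neq0$. This is the only place the elementary exponential inequality is invoked, and it will be reused verbatim in part $(ii)$.

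For part $(ii)$, since the expression is only claimed for $y\neq0$, I would clear the denominator: let $\psi(y)=y\bigl(1+e^{y}\bigr)-2(e^{y}-1)=y+ye^{y}-2e^{y}+2$, so that the quantity in $(ii)$ equals $\psi(y)/y$. Then $\psi(0)=0$, and a short computation gives $\psi'(y)=1+(y-1)e^{y}=e^{y}\bigl(e^{-y}+y-1\bigr)=e^{y}f(-y)$, where $f$ is the function from part $(i)$. Applying part $(i)$ at the point $-y$ gives $\psi'(y)>0$ for all $y\neq0$, so $\psi$ is strictly increasing on $\mathbb{R}$. Combined with $\psi(0)=0$, this yields $\psi(y)>0$ for $y>0$ and $\psi(y)<0$ for $y<0$; in either case $\psi(y)$ has the same sign as $y$, hence $\psi(y)/y>0$, which is exactly the inequality in $(ii)$.

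I do not anticipate a genuine obstacle: both claims are one-variable calculus facts. The only mild subtlety in part $(ii)$ is bookkeeping the sign of $y$ — multiplying an inequality by $y$ reverses it when $y<0$ — which is handled cleanly by phrasing the conclusion as ``$\psi(y)$ and $y$ have the same sign'' rather than as a one-sided bound. An alternative to the identity $\psi'(y)=e^{y}f(-y)$ would be to differentiate once more, $\psi''(y)=ye^{y}$, and argue that $\psi'$ has a strict global minimum of $0$ at the origin; I would prefer the first route since it reuses part $(i)$ directly and needs no further case analysis.
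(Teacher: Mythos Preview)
Your proof is correct. Part $(i)$ is essentially identical to the paper's argument.

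For part $(ii)$ you take a slightly different, and somewhat cleaner, route. The paper differentiates the original expression directly, obtaining $f'(y)=e^{y}y^{-2}g(y)$ with $g(y)=y^{2}-2y+2-2e^{-y}$, and then asserts the sign of $g$; justifying that sign in turn requires one more differentiation, $g'(y)=2(e^{-y}+y-1)=2f(-y)$, which is exactly your identity in disguise. Finally the paper invokes l'H\^{o}pital to evaluate the limit at $0$. By clearing the denominator first and working with $\psi(y)=y(1+e^{y})-2(e^{y}-1)$, you reach $\psi'(y)=e^{y}f(-y)$ after a single differentiation, so part $(i)$ applies immediately and the sign-of-$y$ bookkeeping replaces the limiting argument. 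Both approaches ultimately reduce $(ii)$ to $(i)$; yours avoids the quotient rule, the extra implicit differentiation of $g$, and the l'H\^{o}pital step.
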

\begin{proof}
The first and second derivatives of $f(y)=e^{y}-y-1$ show that $f$
is strictly convex with unique minimum, $f(0)=0$, which proves $(i)$.
Next we prove $(ii)$. Now let $f(y)=1+e^{y}-\frac{2}{y}(e^{y}-1)$.
Its first derivative is given by $f'(y)=e^{y}y^{-2}g(y)$, where $g(y)=y^{2}-2y+2-2e^{-y}$,
so that $f^{\prime}(y)<0$ for $y<0$ and $f^{\prime}(y)>0$ for $y>0$.
Since $\lim_{y\rightarrow0}f(y)=0$ (by l'Hospital's rule) the result
follows. 
\end{proof}
From the definition, (\ref{eq:xi_defined}), it follows that $\xi_{ij}=\xi_{ii}=\xi_{jj}$
whenever $\lambda_{i}=\lambda_{j}$. When $\lambda_{i}\neq\lambda_{j}$
we have the following results for the elements of $\Xi$:
\begin{lem}
\label{lem:xi-inequalities}If $\lambda_{i}<\lambda_{j}$, then $\xi_{ii}<\xi_{ij}<\xi_{jj}$
and $2\xi_{ij}<\xi_{ii}+\xi_{jj}$.
\end{lem}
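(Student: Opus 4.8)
The plan is to prove Lemma~\ref{lem:xi-inequalities} by reducing it to elementary one-variable inequalities in the eigenvalue gap, then invoking the auxiliary estimates in Lemma~\ref{lem:TwoFunctions}. Throughout, assume without loss of generality that $\lambda_i < \lambda_j$ and write $a = \lambda_i$, $b = \lambda_j$ with $a < b$. Recall from~(\ref{eq:xi_defined}) that $\xi_{ii} = e^{a}$, $\xi_{jj} = e^{b}$, and $\xi_{ij} = \tfrac{e^{b}-e^{a}}{b-a}$. Since the function $t \mapsto \xi_{ij}$ is (up to the factor $e^{a}$) the divided difference of the exponential, the natural move is to factor out $e^{a}$ and set $y = b - a > 0$, so that $\xi_{ij} = e^{a}\cdot\tfrac{e^{y}-1}{y}$, $\xi_{ii} = e^{a}$, and $\xi_{jj} = e^{a}e^{y}$. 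After dividing the desired inequalities through by the common positive factor $e^{a}$, everything becomes a statement about $y > 0$ alone.

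First I would establish $\xi_{ii} < \xi_{ij}$, i.e. $1 < \tfrac{e^{y}-1}{y}$, equivalently $e^{y} - y - 1 > 0$; this is exactly part~$(i)$ of Lemma~\ref{lem:TwoFunctions} (applied at the point $y \neq 0$). Next, $\xi_{ij} < \xi_{jj}$ reads $\tfrac{e^{y}-1}{y} < e^{y}$, i.e. $e^{y} - 1 < y e^{y}$, i.e. $e^{-y} > 1 - y$; applying part~$(i)$ of Lemma~\ref{lem:TwoFunctions} at the point $-y$ gives $e^{-y} - (-y) - 1 = e^{-y} + y - 1 > 0$, which is precisely this. These two chained inequalities give $\xi_{ii} < \xi_{ij} < \xi_{jj}$. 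Finally, the inequality $2\xi_{ij} < \xi_{ii} + \xi_{jj}$, after dividing by $e^{a}$, becomes $\tfrac{2}{y}(e^{y}-1) < 1 + e^{y}$, which is exactly part~$(ii)$ of Lemma~\ref{lem:TwoFunctions} (the strict inequality $1 + e^{y} - \tfrac{2}{y}(e^{y}-1) > 0$ for $y \neq 0$, here with $y > 0$). That closes all three claims.

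There is essentially no main obstacle here: the lemma is a routine repackaging of Lemma~\ref{lem:TwoFunctions}, and the only thing to be careful about is the normalization by $e^{\lambda_i}$ and the sign of the gap $y = \lambda_j - \lambda_i$, which must be taken positive so that the monotonicity statements point the right way (the claim is stated for $\lambda_i < \lambda_j$, so this is automatic). One could alternatively give a self-contained derivation of each inequality by differentiating the relevant one-variable function, but since Lemma~\ref{lem:TwoFunctions} has already been proved, it is cleanest to quote it. I would also remark in passing that the ordering $\xi_{ii} < \xi_{ij} < \xi_{jj}$ reflects the strict monotonicity of the exponential's divided differences, and that the midpoint inequality $2\xi_{ij} < \xi_{ii} + \xi_{jj}$ reflects the strict convexity of $t \mapsto e^{t}$ — these geometric facts are what Lemma~\ref{lem:TwoFunctions} encodes analytically, and they are what will be used downstream to bound the eigenvalues of $J(x)$.
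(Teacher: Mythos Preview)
Your proof is correct and essentially identical to the paper's: both factor out $e^{\lambda_i}$ (the paper does so inline, you do it up front via the substitution $y=\lambda_j-\lambda_i$) and then invoke Lemma~\ref{lem:TwoFunctions}$(i)$ for the two ordering inequalities and Lemma~\ref{lem:TwoFunctions}$(ii)$ for the midpoint inequality. The only cosmetic difference is that for $\xi_{ij}<\xi_{jj}$ the paper factors out $e^{\lambda_j}$ directly, whereas you first factor out $e^{\lambda_i}$ and then rearrange to apply part~$(i)$ at $-y$; the underlying inequality is the same.
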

\begin{proof}
From the definition, (\ref{eq:xi_defined}), $\xi_{ij}-\xi_{ii}=\tfrac{e^{\lambda_{j}}-e^{\lambda_{i}}}{\lambda_{j}-\lambda_{i}}-e^{\lambda_{i}}=e^{\lambda_{i}}(\tfrac{e^{\lambda_{j}-\lambda_{i}}-1}{\lambda_{j}-\lambda_{i}}-1)=e^{\lambda_{i}}\tfrac{e^{\lambda_{j}-\lambda_{i}}-1-(\lambda_{j}-\lambda_{i})}{\lambda_{j}-\lambda_{i}}>0$,
where the numerator is positive by Lemma \ref{lem:TwoFunctions} ($i$).
So are $e^{\lambda_{i}}$ and $\lambda_{j}-\lambda_{i}$, which proves
$\xi_{ij}>\xi_{ii}$. Analogously, $\xi_{jj}-\xi_{ij}=e^{\lambda_{j}}-\tfrac{e^{\lambda_{j}}-e^{\lambda_{i}}}{\lambda_{j}-\lambda_{i}}=e^{\lambda_{j}}(1-\tfrac{1-e^{\lambda_{i}-\lambda_{j}}}{\lambda_{j}-\lambda_{i}})=e^{\lambda_{j}}\tfrac{-(\lambda_{i}-\lambda_{j})-1+e^{\lambda_{i}-\lambda_{j}}}{\lambda_{j}-\lambda_{i}}>0$,
because all terms are positive, where we again used Lemma \ref{lem:TwoFunctions}
($i$). Next, $\xi_{ii}+\xi_{jj}-2\xi_{ij}=e^{\lambda_{i}}+e^{\lambda_{j}}-2\tfrac{e^{\lambda_{i}}-e^{\lambda_{j}}}{\lambda_{i}-\lambda_{j}}=e^{\lambda_{i}}(1+e^{\lambda_{j}-\lambda_{i}}-2\tfrac{e^{\lambda_{j}-\lambda_{i}}-1}{\lambda_{j}-\lambda_{i}})>0$,
where the inequality follows by Lemma \ref{lem:TwoFunctions}.$ii$,
because $\lambda_{i}\neq\lambda_{j}$.
\end{proof}
\begin{lem}
\label{lem:J-tilde-expression}$J(x)$ and $\tilde{J}(x)=I-D^{-\frac{1}{2}}HD^{-\frac{1}{2}}$
have the same eigenvalues, where
\[
\tilde{J}(x)=\sum_{k=1}^{n-1}\sum_{l=k}^{n}\varphi_{kl}\Bigl(D^{-\frac{1}{2}}u_{kl}u_{kl}^{\prime}D^{-\frac{1}{2}}\Bigl),
\]
with $u_{kl}=Q_{\cdot,k}\odot Q_{\cdot,l}\in\mathbb{R}^{n}$ and $\varphi_{kl}=\xi_{kk}+\xi_{ll}-2\xi_{kl}$.
\end{lem}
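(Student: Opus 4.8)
The plan is to build $\tilde{J}(x)$ from $J(x)=I-D^{-1}H$ by a similarity transformation and then to re-express the resulting quadratic form $D^{-1/2}HD^{-1/2}$ using the rank-one decomposition implicit in formula (\ref{eq:H_ij}). First I would note that $J(x)=I-D^{-1}H$ and $\tilde{J}(x)=I-D^{-1/2}HD^{-1/2}$ are related by $\tilde{J}(x)=D^{1/2}J(x)D^{-1/2}$, since $D$ is diagonal with strictly positive entries $\delta_i>0$ (positivity of the $\delta_i$ follows because they are diagonal entries of the positive definite matrix $e^{G[x]}$). Conjugate matrices have identical eigenvalues, which gives the first assertion immediately.

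Next I would rewrite $H$ as a sum of rank-one terms. From (\ref{eq:H_ij}), $[H]_{ij}=\sum_{k,l}q_{ik}q_{il}q_{jk}q_{jl}\,\xi_{kl}$. Introducing the vectors $u_{kl}=Q_{\cdot,k}\odot Q_{\cdot,l}$ (so that the $i$-th component of $u_{kl}$ is $q_{ik}q_{il}$), this reads $H=\sum_{k=1}^{n}\sum_{l=1}^{n}\xi_{kl}\,u_{kl}u_{kl}^{\prime}$. Pairing the $(k,l)$ and $(l,k)$ terms (recall $\xi_{kl}=\xi_{lk}$ and $u_{kl}=u_{lk}$) collapses this to a sum over $k\le l$, with the strictly-upper terms carrying a factor of two. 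Then I would substitute this into $D^{-1/2}HD^{-1/2}$ and subtract it from $I$; the remaining task is to show the diagonal ($k=l$) contribution exactly cancels the identity. This uses the orthonormality of $Q$: $\sum_{k}u_{kk}=\sum_{k}(Q_{\cdot,k}\odot Q_{\cdot,k})=\mathrm{diag}(QQ^{\prime})=\iota$ componentwise, and more to the point $\sum_{k}\xi_{kk}\,[u_{kk}u_{kk}^{\prime}]$ needs to be reconciled with $D=\mathrm{diag}(\delta_1,\dots,\delta_n)$. In fact $\delta_i=[e^{G[x]}]_{ii}=\sum_{k}q_{ik}^2 e^{\lambda_k}=\sum_k q_{ik}^2\,\xi_{kk}$, so the diagonal part of $H$ is not literally $D$; rather, the cancellation with $I$ in $\tilde{J}$ must be engineered by grouping. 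The cleanest route is to observe that the off-diagonal ($k\ne l$) terms can be rewritten using $2\xi_{kl}=(\xi_{kk}+\xi_{ll})-\varphi_{kl}$, which splits each such term into a piece that combines with the diagonal contributions to reconstruct $D$ (hence cancels the $I$ after the $D^{-1/2}\cdot D^{-1/2}$ sandwiching) and a remaining piece with coefficient $\varphi_{kl}=\xi_{kk}+\xi_{ll}-2\xi_{kl}$.

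The algebraic heart of the proof is therefore the identity
\[
D = \sum_{k=1}^{n}\xi_{kk}\,u_{kk}u_{kk}^{\prime} \;+\; \sum_{k<l}\bigl(\xi_{kk}+\xi_{ll}\bigr)\,u_{kl}u_{kl}^{\prime},
\]
read entrywise: the $(i,i)$ diagonal entry should be $\delta_i=\sum_k q_{ik}^2\xi_{kk}$, and every off-diagonal $(i,j)$ entry should vanish. The diagonal check gives $\sum_k q_{ik}^4\xi_{kk}+\sum_{k<l}(q_{ik}q_{il})^2(\xi_{kk}+\xi_{ll})=\sum_k q_{ik}^2\xi_{kk}\bigl(\sum_l q_{il}^2\bigr)=\sum_k q_{ik}^2\xi_{kk}$, using $\sum_l q_{il}^2=1$. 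The off-diagonal check, for $i\ne j$, gives $\sum_k q_{ik}^2 q_{jk}^2\xi_{kk}+\sum_{k<l}q_{ik}q_{il}q_{jk}q_{jl}(\xi_{kk}+\xi_{ll})=\sum_k q_{ik}q_{jk}\xi_{kk}\bigl(\sum_l q_{il}q_{jl}\bigr)=0$, again by orthonormality of the rows of $Q$. Granting this identity, $D^{-1/2}\bigl(\sum_{k\le l}\xi_{kl}u_{kl}u_{kl}^{\prime}\bigr)D^{-1/2}$ (with the factor-two convention on $k<l$ absorbed via the $2\xi_{kl}=(\xi_{kk}+\xi_{ll})-\varphi_{kl}$ substitution) equals $I-\sum_{k<l}\varphi_{kl}D^{-1/2}u_{kl}u_{kl}^{\prime}D^{-1/2}$, and hence $\tilde{J}(x)=\sum_{k<l}\varphi_{kl}\,D^{-1/2}u_{kl}u_{kl}^{\prime}D^{-1/2}$; allowing the summation to run over $k=l$ adds nothing since $\varphi_{kk}=0$, which matches the stated double sum $\sum_{k=1}^{n-1}\sum_{l=k}^{n}$. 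I expect the main obstacle to be purely bookkeeping: keeping the factor-of-two from the $k<l$ versus $k\le l$ reindexing consistent while performing the $2\xi_{kl}=(\xi_{kk}+\xi_{ll})-\varphi_{kl}$ split, so that the "$(\xi_{kk}+\xi_{ll})$" halves land exactly in the identity-reconstruction above and only the $\varphi_{kl}$ halves survive. No deep idea is needed beyond the two orthonormality relations $\sum_l q_{il}^2=1$ and $\sum_l q_{il}q_{jl}=1_{\{i=j\}}$ and the observation $\delta_i=\sum_k q_{ik}^2\xi_{kk}$.
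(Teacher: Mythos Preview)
Your proposal is correct and follows essentially the same approach as the paper: the similarity $\tilde{J}=D^{1/2}JD^{-1/2}$ for the eigenvalue claim, then the orthonormality relations $\sum_{l}q_{il}q_{jl}=1_{\{i=j\}}$ together with $\delta_i=\sum_k q_{ik}^2\xi_{kk}$ to regroup the $\xi$-terms into $\varphi_{kl}=\xi_{kk}+\xi_{ll}-2\xi_{kl}$. The only difference is organizational: the paper computes $\tilde{J}_{ii}$ and $\tilde{J}_{ij}$ entrywise and extracts $\varphi_{kl}$ in each case, whereas you package the same entrywise checks into the single matrix identity $D=\sum_{k}\xi_{kk}\,u_{kk}u_{kk}^{\prime}+\sum_{k<l}(\xi_{kk}+\xi_{ll})\,u_{kl}u_{kl}^{\prime}$ and deduce $H=D-\sum_{k<l}\varphi_{kl}\,u_{kl}u_{kl}^{\prime}$ in one stroke; the underlying algebra is identical.
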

\begin{proof}
For a vector $y$ and a scalar $\nu$, $Jy=\nu y\Leftrightarrow\tilde{J}w=\nu w$,
where $y=D^{-\frac{1}{2}}w$, because $J=I-D^{-1}H=D^{-\frac{1}{2}}(I-D^{-\frac{1}{2}}HD^{-\frac{1}{2}})D^{\frac{1}{2}}=D^{-\frac{1}{2}}\tilde{J}D^{\frac{1}{2}}$.
Next, we turn to the expression for $\tilde{J}$. First, note that
$\sum_{k=1}^{n}q_{ik}^{2}\xi_{kk}=\sum_{k=1}^{n}q_{ik}^{2}e^{\lambda_{k}}=Q_{i,\cdot}e^{\Lambda}Q_{i,\cdot}^{\prime}=[e^{Q\Lambda Q^{\prime}}]_{ii}=[e^{G}]_{ii}=\delta_{i}$.
So diagonal elements of $\tilde{J}$ are given by 
\begin{align*}
\tilde{J}_{ii}=1-\frac{H_{ii}}{\delta_{i}} & =\frac{1}{\delta_{i}}\Bigl(\sum_{k=1}^{n}q_{ik}^{2}\xi_{kk}-\sum_{k=1}^{n}\sum_{l=1}^{n}q_{ik}^{2}q_{il}^{2}\xi_{kl}\Bigl)=\frac{1}{\delta_{i}}\Bigl(\sum_{k=1}^{n}q_{ik}^{2}\xi_{kk}-\sum_{k=1}^{n}q_{ik}^{2}q_{ik}^{2}\xi_{kk}-2\sum_{k=1}^{n-1}\sum_{l=k}^{n}q_{ik}^{2}q_{il}^{2}\xi_{kl}\Bigl)\\
 & =\frac{1}{\delta_{i}}\Bigl(\sum_{k=1}^{n}q_{ik}^{2}\xi_{kk}(1-q_{ik}^{2})-2\sum_{k=1}^{n-1}\sum_{l=k}^{n}q_{ik}^{2}q_{il}^{2}\xi_{kl}\Bigl)=\frac{1}{\delta_{i}}\Bigl(\sum_{k=1}^{n}q_{ik}^{2}\xi_{kk}\sum_{\substack{l=1\\
l\neq k
}
}^{n}q_{il}^{2}-2\sum_{k=1}^{n-1}\sum_{l=k}^{n}q_{ik}^{2}q_{il}^{2}\xi_{kl}\Bigl)\\
 & =\frac{1}{\delta_{i}}\Bigl(\sum_{k=1}^{n-1}\sum_{l=k}^{n}q_{ik}^{2}q_{il}^{2}(\xi_{kk}+\xi_{ll})-2\sum_{k=1}^{n-1}\sum_{l=k}^{n}q_{ik}^{2}q_{il}^{2}\xi_{kl}\Bigl)=\frac{1}{\delta_{i}}\sum_{k=1}^{n-1}\sum_{l=k}^{n}q_{ik}^{2}q_{il}^{2}\varphi_{kl},
\end{align*}
where we used (\ref{eq:H_ij}). Similarly for the off-diagonal elements
we have
\begin{align*}
\tilde{J}_{ij}=-\frac{H_{ij}}{\sqrt{\delta_{i}\delta_{j}}} & =-\frac{1}{\sqrt{\delta_{i}\delta_{j}}}\sum_{k=1}^{n}\sum_{l=1}^{n}q_{ik}q_{jk}q_{il}q_{jl}\xi_{kl}=-\frac{1}{\sqrt{\delta_{i}\delta_{j}}}\Bigl(\sum_{k=1}^{n}q_{ik}^{2}q_{jk}^{2}\xi_{kk}+2\sum_{k=1}^{n-1}\sum_{l=k}^{n}q_{ik}q_{jk}q_{il}q_{jl}\xi_{kl}\Bigl)\\
 & =-\frac{1}{\sqrt{\delta_{i}\delta_{j}}}\Bigl(\sum_{k=1}^{n}q_{ik}q_{jk}\Big(-\sum_{\substack{l=1\\
l\neq k
}
}^{n}q_{il}q_{jl}\Big)\xi_{kk}+2\sum_{k=1}^{n-1}\sum_{l=k}^{n}q_{ik}q_{jk}q_{il}q_{jl}\xi_{kl}\Bigl)\\
 & =-\frac{1}{\sqrt{\delta_{i}\delta_{j}}}\Bigl(-\sum_{k=1}^{n-1}\sum_{l=k}^{n}q_{ik}q_{jk}q_{il}q_{jl}(\xi_{kk}+\xi_{ll})+2\sum_{k=1}^{n-1}\sum_{l=k}^{n}q_{ik}q_{jk}q_{il}q_{jl}\xi_{kl}\Bigl)\\
 & =\frac{1}{\sqrt{\delta_{i}\delta_{j}}}\sum_{k=1}^{n-1}\sum_{l=k}^{n}q_{ik}q_{jk}q_{il}q_{jl}\varphi_{kl}.
\end{align*}
In the derivations above we used that $\sum_{k=1}^{n}q_{ik}q_{jk}=1_{\{i=j\}}$,
since $Q^{\prime}Q=QQ^{\prime}=I$.
\end{proof}
\noindent\textbf{Proof of Lemma \ref{lem:gContraction}.} Because
$A(x)$ is symmetric and positive definite, then so is the principal
sub-matrix, $H(x)$. Consequently, $M=D^{-\frac{1}{2}}H(x)D^{-\frac{1}{2}}$
is symmetric and positive definite. Thus, any eigenvalue, $\mu$ of
$M$ is strictly positive. So if $\nu$ is an eigenvalue of $\tilde{J}(x)=I-D^{-\frac{1}{2}}HD^{-\frac{1}{2}}$,
then $\nu=1-\mu$ where $\mu$ is an eigenvalue of $M$, from which
it follows that all eigenvalues of $\tilde{J}$ are strictly less
than 1.

Consider a quadratic form of $\tilde{J}$ with an arbitrary vector
$z\in\mathbb{R}^{n}$. Using Lemma \ref{lem:J-tilde-expression},
it follows that any quadratic form is bounded from below by

\[
z^{\prime}\tilde{J}z=\sum_{k=1}^{n-1}\sum_{l=k}^{n}\varphi_{kl}\Bigl(z^{\prime}D^{-\frac{1}{2}}u_{kl}u_{kl}^{\prime}D^{-\frac{1}{2}}z\Bigl)=\sum_{k=1}^{n-1}\sum_{l=k}^{n}\varphi_{kl}\Bigl(z^{\prime}D^{-\frac{1}{2}}u_{kl}\Bigl)^{2}\geq0,
\]
because $\varphi_{kl}>0$ by Lemma \ref{lem:xi-inequalities}. Hence,
$\tilde{J}$ is positive semi-definite and $\nu_{i}\geq0$, for all
$i=1,\ldots,n$. 

Finally, since $J(x)$ and $\tilde{J}(x)$ have the same eigenvalues,
it follows that all eigenvalues of $J(x)$ lie within the interval
$[0,1)$, which proves that $g(x)$ is a contraction. $\square$

\noindent\textbf{Proof of Theorem \ref{thm:OneToOne}.} The Theorem
is equivalent to the statement that for any symmetric matrix $G$,
there always exists a unique solution to $g(x)=x$. This follows from
Lemma \ref{lem:gContraction} and Banach's fixed point theorem. $\square$

\noindent\textbf{Proof of Proposition \ref{prop:Permutate}.} We
have $Y=PX$, for some permutation matrix, $P$, so that $C_{y}=PC_{x}P^{\prime}$.
Let $C_{x}=Q\Lambda Q^{\prime}$ be the spectral decomposition of
$C_{x}$, such that $\log C_{x}=Q\log\Lambda_{x}Q^{\prime}$, where
$Q^{\prime}Q=I$ and $\Lambda$ is a diagonal matrix. So $C_{y}=PC_{x}P^{\prime}=PQ\Lambda Q^{\prime}P^{\prime}$,
where $Q^{\prime}P^{\prime}PQ=Q^{\prime}Q=I$. The first equality
uses the fact that $P$ is a permutation matrix. Therefore, $C_{y}=(PQ)\Lambda(PQ)^{\prime}$
is the spectral decomposition of $C_{y}$ and $\log C_{y}=(PQ)\log\Lambda(PQ)^{\prime}=P[\log C_{x}]P^{\prime}$. 

Next, let the $i$-th and $j$-th rows of $P$ be the $r$-th and
$s$-th unit vectors, $e_{r}^{\prime}$ and $e_{s}^{\prime}$, respectively.
Then we have $[\log C_{y}]_{ij}=[\log C_{x}]_{rs}$ and, by symmetry,
\[
[\log C_{y}]_{ij}=[\log C_{y}]_{ji}=[\log C_{x}]_{rs}\mathbf{=}[\log C_{x}]_{sr},
\]
which shows that $\gamma_{y}$ is simply a permutation of the elements
in $\gamma_{x}$. $\square$

\noindent\textbf{Proof of Proposition \ref{prop:EquiCorrelation}.}
An equicorrelation matrix can be written as $C=(1-\rho)I_{n}+\rho U_{n}$,
where $I_{n}\in\mathbb{R}^{n\times n}$ is the identity matrix and
$U_{n}\in\mathbb{R}^{n\times n}$ is a matrix of ones. Using the Sherman--Morrison
formula, we can obtain the inverse, $C^{-1}=\frac{1}{1-\rho}(I_{n}-\frac{\rho}{1+(n-1)\rho}U_{n})$,
so that
\begin{equation}
G=\log C=-\log(C^{-1})=-\log(\tfrac{1}{1-\rho}I_{n})-\log(I_{n}-\tfrac{\rho}{1+(n-1)\rho}U_{n}).\label{eq:ec2}
\end{equation}
Because the first term is a diagonal matrix, the off-diagonal elements
of $G$ are determined only by the second term, which equals
\begin{equation}
-\log(I_{n}-\varphi U_{n})=-\sum_{k=1}^{\infty}(-1)^{k+1}\tfrac{(-\varphi U_{n})^{k}}{k}=-\left[\frac{1}{n}\sum_{k=1}^{\infty}(-1)^{k+1}\tfrac{(-n\varphi)^{k}}{k}\right]U_{n}=-\frac{1}{n}\log(1-n\varphi)U_{n},\label{eq:ec3}
\end{equation}
where $\varphi=\rho/(1+(n-1)\rho)$ and we have used the fact that
$U_{n}^{k}=n^{k-1}U_{n}$. It now follows that
\[
G_{ij}=\gamma_{c}=-\frac{1}{n}\log(1-\tfrac{n\rho}{1+(n-1)\rho})=-\frac{1}{n}\log\tfrac{1-\rho}{1+\rho(n-1)}=-\frac{1}{n}\log\tfrac{1-\rho}{1+\rho(n-1)},\quad\text{for all }i\neq j
\]
for all $i$ and $j$, such that $i\neq j$. $\square$

\noindent\textbf{Proof of Proposition \ref{prop:dC/dG}.} From Theorem
\ref{thm:OneToOne} it follows that the diagonal, $x=\mathrm{diag}G$,
is fully characterized by the off-diagonal elements, $y=\mathrm{vecl}G=\mathrm{vecl}G^{\prime}$,
and we may write $x=x(y)$. For the off-diagonal elements of the correlation
matrix, $z=\mathrm{vecl}C\mathrm{=vecl}C^{\prime}$, we have $z=z(x,y)=z(x(y),y)$,
since $C=e^{G}$, and it follows that
\begin{equation}
\frac{dz(x,y)}{dy}=\frac{\partial z(x,y)}{\partial x}\frac{dx(y)}{dy}+\frac{\partial z(x,y)}{\partial y}.\label{eq:d2}
\end{equation}
With $A(x,y)=d\mathrm{vec}C/d\mathrm{vec}G$ and the definitions of
$E_{l}$ and $E_{u}$, the second term is given by:
\begin{equation}
\frac{\partial z(x,y)}{\partial y}=E_{l}A(x,y)E_{l}^{\prime}+E_{l}A(x,y)E_{u}^{\prime}.\label{eq:d3}
\end{equation}
The expression has two terms because a change in an element of $y$
affects two symmetric entries in the matrix $G$. Similarly, for the
first part of the first term in (\ref{eq:d2}) we have,
\begin{equation}
\frac{\partial z(x,y)}{\partial x}=E_{l}A(x,y)E_{d}^{\prime},\label{eq:d4}
\end{equation}
and what remains is to determine $\frac{dx(y)}{dy}$. For this purpose
we introduce $D(x,y)=\mathrm{diag}[e^{G(x,y)}]-\iota$ which implicitly
defines the relation between $x$ and $y$. The requirement that $e^{G}$
is a correlation matrix, is equivalent to $D(x,y)=0$. Next, let $\frac{\partial D}{\partial x}$
and $\frac{\partial D}{\partial y}$ denote the Jacobian matrices
of $D(x,y)$ with respect to $x$ and $y$, respectively. These Jacobian
matrices have dimensions $n\times n$ and $n\times n(n-1)/2$, respectively,
and can also be expressed in terms of matrix $A(x,y)$, as follows
\begin{align*}
\frac{\partial D}{\partial x}=E_{d}A(x,y)E_{d}^{\prime}, & \qquad\frac{\partial D}{\partial y}=E_{d}A(x,y)E_{l}^{\prime}+E_{d}A(x,y)E_{u}^{\prime}.
\end{align*}
Note that $\frac{\partial D}{\partial x}$ is a principal sub-matrix
of positive definite matrix $A$ and, hence, is an invertible matrix.
Therefore, from the Implicit Function Theorem it follows
\begin{equation}
\frac{dx(y)}{dy}=-\Bigl(\frac{\partial D}{\partial x}\Bigl){}^{-1}\frac{\partial D}{\partial y}=-\Bigl(E_{d}A(x,y)E_{d}^{\prime}\Bigl){}^{-1}\Bigl(E_{d}A(x,y)E_{l}^{\prime}+E_{d}A(x,y)E_{u}^{\prime}\Bigl).\label{eq:d5}
\end{equation}
The results now follows by inserting \eqref{eq:d3}, \eqref{eq:d4}
and \eqref{eq:d5} into \eqref{eq:d2}. $\square$

\noindent\textbf{Proof of Lemma \ref{lem:DiagonalA}.} We have $G=Q\log\Lambda Q^{\prime}$,
where $C=Q\Lambda Q^{\prime}$ is the spectral decomposition of the
correlation matrix. Thus a generic element of $G$ can be written
as $G_{ij}=\sum_{k=1}^{n}q_{ik}q_{jk}\log\lambda_{k}$. By Jensen's
inequality it follows that $G_{ii}=\sum_{k=1}^{n}q_{ik}^{2}\log\lambda_{k}\leq\log\left(\sum_{k=1}^{n}q_{ik}^{2}\lambda_{k}\right)$,
where we used that $\sum_{k=1}^{n}q_{ik}q_{jk}=1_{\{i=j\}}$, because
$Q^{\prime}Q=I$. Finally, since $\sum_{k=1}^{n}q_{ik}^{2}\lambda_{k}=C_{ii}=1$,
it follows that $G_{ii}\leq\log1=0$. $\square$
\end{document}